\newcommand{\qed}{\hfill QED.}
\newtheorem{theorem}{Theorem}
\newtheorem{lemma}[theorem]{Lemma}
\newtheorem{observation}[theorem]{Observation}
\newtheorem{definition}{Definition}
\newenvironment{proof}{\textbf{Proof.}}{\medskip}
\newcommand{\LV}[1]{#1}
\newcommand{\SV}[1]{}
\begin{document}
\SV{\mainmatter}
\SV{\title{Approximation Algorithms Inspired by Kernelization Methods}}
\LV{\title{Data Reductions and Combinatorial Bounds for Improved Approximation Algorithms\\
{\small This work is supported by the bilateral research cooperation
CEDRE between France and  Lebanon (grant number
30885TM). An extended abstract will be presented at ISAAC 2014.}}}
\SV{\titlerunning{Approximation Algorithms Inspired by Kernelization Methods}}

\SV{\maketitle}

\thispagestyle{empty}
\begin{center}
\textbf{\Large Data Reductions and Combinatorial Bounds for Improved Approximation Algorithms}\footnote{This work is supported by the bilateral research cooperation
CEDRE between France and  Lebanon (grant number
30885TM). An extended abstract will be presented at ISAAC 2014.}\\[2ex]
{Faisal N. Abu-Khzam}
\\
{Lebanese American University, Beirut, Lebanon}\\
{faisal.abukhzam@lau.edu.lb}
\\[1ex]
{Cristina Bazgan}
\\
{PSL, University of Paris-Dauphine, LAMSADE UMR 7243, France;\\ and Institut Universitaire de France}
\\
{bazgan@lamsade.dauphine.fr}
\\[1ex]
{Morgan 
Chopin}
\\
{Institut f\"ur Optimierung und Operations Research, Universit\"at Ulm, Germany}
\\
{morgan.chopin@uni-ulm.de}
\\[1ex]
{Henning Fernau}
\\{Fachbereich 4, Abteilung Informatikwissenschaften, 
Universit\"at Trier,  
Germany}\\
{fernau@uni-trier.de}
%
%
%
%
%
\end{center}
\SV{
\author{Faisal N. Abu-Khzam\inst1, Cristina Bazgan\inst{2,5}, Morgan 
Chopin\inst3, Henning Fernau\inst4}
\authorrunning{Abu-Khzam, Bazgan, Chopin and Fernau}
\institute{Lebanese American University, Beirut, Lebanon\\
\email{faisal.abukhzam@lau.edu.lb}
\and
PSL, University of Paris-Dauphine, LAMSADE UMR 7243, France\\
\email{bazgan@lamsade.dauphine.fr}
\and
Institut f\"ur Optimierung und Operations Research, Universit\"at Ulm, Germany\\
\email{morgan.chopin@uni-ulm.de}
\and Fachbereich 4, 
Informatikwissenschaften, 
Universit\"at Trier, Germany\\
\email{fernau@uni-trier.de}
\and
Institut Universitaire de France}
}

\SV{\maketitle}

\thispagestyle{empty}

\begin{abstract}
Kernelization algorithms in the context of Parameterized Complexity  are often based on a combination of reduction rules and combinatorial insights. We will expose in this paper
a similar strategy for obtaining polynomial-time approximation algorithms.
Our method features the use of approximation-preserving reductions, 
akin to the notion of parameterized reductions.
We exemplify this method to obtain the currently best approximation algorithms for
\textsc{Harmless Set}, \textsc{Differential} and \textsc{Multiple Nonblocker},
all of them can be considered in the context of securing networks or information propagation.
\end{abstract}

\textbf{Keywords}: Reduction rules, maximization problems, 
 polynomial-time approximation, domination problems

\section{Introduction}

It is well-known that most interesting combinatorial problems are hard from a computational point of view.
More technically speaking, they mostly turn out to be NP-hard. 
As many of these combinatorial problems have some importance for practical applications,
several techniques have been developed to deal with them.
From a more mathematical angle, the two most interesting and wide-spread approaches are (polynomial-time)
approximation and fixed-parameter algorithms.
Both areas have developed their own set of tools over the years.
For instance, methods related to Linear Programming are prominent in the area of Approximation Algorithms~\cite{Ausetal99}.
Conversely, data reduction rules are the method of choice to obtain kernelization results, 
which is central to Parameterized Algorithms~\cite{DowFel2013}. Another essential ingredient to kernelization algorithms
is a collection of combinatorial insights to the specific problem, often (already) supplied by
mathematicians working in Combinatorics.
It is quite natural to try to employ certain tools from one area to the other one.
For example, the title of the paper~\cite{Naretal2012} nicely indicates the intended use of Linear Programming to obtain FPT algorithms. 
In this paper, we take the opposite approach and show how to use data reduction rules and 
(constructive) combinatorial insights to obtain approximation algorithms, in particular for maximization problems.
Notice that data reduction rules are often used in heuristic approaches, well-established in practical implementations. So, our approach also brings the often more theoretical findings closer to practice.

For the purpose of illustrating our method, we will mainly deal with maximization problems that are obtained from domination-type graph problems. We  first describe these problems, using 
standard graph-theoretic terminology.

Let $G=(V,E)$ be an undirected graph and $D\subseteq V$.
\begin{enumerate}
\item $D$ is called a \emph{dominating set} if, for all $x\in V\setminus D$, there is a $y\in D\cap N(x)$.
$V\setminus D$ is known as an \emph{enclaveless set} \cite{Sla77} or 
as a \emph{nonblocker set} \cite{Dehetal2006}.
\item $D$ is called a \emph{total dominating set} if, for all $x\in V$, there is a $y\in D\cap N(x)$.
$V\setminus D$ has been introduced as a \emph{harmless set}
or \emph{robust set} (with unaminity thresholds) in~\cite{BazCho2012}.
\item If $D$ can be partitioned as $D=D_1\cup D_2$ such that, for all $x\in V\setminus D$, there is a $y\in D_2\cap N(x)$, then $(D_2,D_1)$ defines a \emph{Roman domination function} $f_{D_1,D_2}:V\to\{0,1,2\}$ such that $f_{D_1,D_2}(V)=2|D_2|+|D_1|$. According to~\cite{BerFerSig2014}, $D_0:=V\setminus (D_1\cup D_2)$ is also known as the \emph{differential (set)} of a graph (as introduced in~\cite{Masetal2006}) if $f_{D_1,D_2}(V)$ is smallest possible.
\item If for all $x\in V\setminus D$, there are $k$ elements in $D\cap N(x)$, then $D$ is a \emph{$k$-dominating set}, see \cite{CarRod90,CocGamShe85,FinJac85}.
We will call $V\setminus D$ a \emph{$k$-nonblocker set}.
\end{enumerate}

The maximization problems derived from these four definitions are: \textsc{Nonblocker}, \textsc{Harmless Set}, \textsc{Differential}, and \textsc{$k$-Nonblocker}.
\LV{Actually, \textsc{Nonblocker} has been looked into by the approximation algorithm community quite a lot in recent years~\cite{AthCKK2009,CheENRRS2013,NguSHSMZ2008}, where it is known as the \textsc{Maximum Star Forest} problem.}\LV{
}
Although these problems are all better known from the minimization perspective, there is a good reason to study them in this complementary way: All of these minimization problems do not possess constant-factor approximations under reasonable complexity assumptions (the reduction shown in~\cite{ChlChl2008a} for \textsc{(Total) Dominating Set} starts from \textsc{Set Cover}), while the complementary problems can be treated in this favorable way. For \textsc{Roman Domination}, observe that the reduction shown in \cite{Fer08} works from \textsc{Set Cover}, so that again (basically) the same lower bounds follow.
This move is related  to 
\emph{differential approximation}~\cite{Ausetal2005}.\LV{
 }
Notice that this comes along with similar properties from the perspective of Parameterized Complexity:
While natural parameterizations of the minimizations lead to W[2]-hard problems~\cite{DowFel2013,Fer08}, the natural parameterizations of the maximization counterparts are fixed-parameter tractable.\LV{
}
However, as this is more customary as a combinatorial entity, let us refer (as usual) by $\gamma(G)$ to the size of the smallest dominating set of $G$, by $\gamma_t(G)$ to the size of the smallest total dominating set,
by $\gamma_R(G)$ to the Roman domination number of $G$,
i.e., the smallest value of a Roman domination function of $G$, and by $\gamma_k(G)$ to the size of the smallest $k$-dominating set of $G$.

\paragraph{Some graph-theoretic notations}  Let $G=(V,E)$ be a simple undirected graph. We denote by $N(x)$ the set of neighbors of vertex $x$; the cardinality of $N(x)$ is the \emph{degree} of $x$. A vertex of degree zero is known as an \emph{isolated vertex}, and a vertex of degree one as a \emph{leaf}. The number of vertices of a graph is called its \emph{order}.
Given  $U\subseteq V$, $G[U]$ denotes the subgraph induced by $U$. A repetition-free sequence $x_1,\dots,x_k$ of vertices is a \emph{path} in $G$ (of \emph{length} $k-1$) if $x_ix_{i+1}\in E$ for $i=1,\dots,k-1$.  
A {\em chain} is an induced path whose interior 
vertices are of degree two in $G$. The \emph{diameter} of $G$ is the greatest length of a shortest path in~$G$.

\paragraph{Main Results.}  We introduce a notion
of approximation-preserving reductions analogous to parameter-preserving reductions known in Parameterized Complexity in order to obtain new approximation algorithms. We introduce a general methodology to obtain constant-factor approximations for various problems. 
For instance, along with an algorithmic version of the upper bound 
obtained in \cite{LamWei2007} on the size of a total dominating set,
we present a factor-two approximation algorithm for  \textsc{Harmless Set}, beating the previously known factor of three~\cite{BazCho2012}. 
Moreover, we are deriving a  factor-$\frac{11}{3}$  approximation algorithm for \textsc{Differential}, which was set up as an open problem in \cite{BerFer2014}, where this approximability question could be only settled for bounded-degree graphs; our approach also 
improves on the factor-4 approximation exhibited in~\cite{BerFer2013sub}. \LV{However, as  in \cite{BerFer2014} APX-completeness was shown for the degree-bounded case, nothing better than constant-factor approximations can be expected for general graphs. }Finally, we present constant-factor approximation algorithms for $k$-\textsc{Nonblocker}.

\paragraph{Organization of the paper} 
Section~\ref{sec:max} explains the use of reduction rules within maximization problems.
It also exhibits the general method.
Section\LV{s}~\ref{sec:Harmless}\LV{ and \ref{sec:Harmless-Reductions}} show\SV{s} how to employ our general method to one specific problem in a non-trivial way.
Sections~\ref{sec:Differential} and~\ref{sec:Multiple} show that the same method can be also applied to other problems.
We conclude with discussing further research directions.


\SV{All proofs and some more details that are omitted due to space restrictions can be found in the long version of this paper~\cite{AbuBCF2014}.}

\section{Approximation preserving reductions for maximization problems}
\label{sec:max}

Specializing standard terminology from~\cite{Ausetal99}, we can express the following. 
A maximization problem $\mathcal{P}$ can be specified by a 
triple $(I_\mathcal{P},\mathrm{SOL}_\mathcal{P},m_\mathcal{P})$, where
\begin{enumerate}
 \item $I_\mathcal{P}$ is the set of input instances of $\mathcal{P}$;
 \item $\mathrm{SOL}_\mathcal{P}$ is a function that associates to $x\in I_\mathcal{P}$ the set $\mathrm{SOL}_\mathcal{P}(x)$ of feasible solutions of $x$;
 \item $m_\mathcal{P}$ provides on $(x,y)$, where $x\in I_\mathcal{P}$ and $y\in\mathrm{SOL}_\mathcal{P}(x)$, a positive integer which is the value of the solution $y$.
\end{enumerate}

An optimum solution $y^*$ to $x$ satisfies: (i) $y^*\in \mathrm{SOL}_\mathcal{P}(x)$, and (ii) $m_\mathcal{P}(y^*)=\max\{m_\mathcal{P}(y)\mid y\in\mathrm{SOL}_\mathcal{P}(x)\}$.
The value $m_\mathcal{P}(y^*)$ is also referred to as $m^*_\mathcal{P}(x)$ for brevity.

Given a maximization problem  $\mathcal{P}$, a \emph{factor-$\alpha$ approximation}, $\alpha\ge 1$, associates to each $x\in I_\mathcal{P}$ some $y\in\mathrm{SOL}_\mathcal{P}(x)$ such that 
$\alpha\cdot m_\mathcal{P}(x,y)\geq m^*_\mathcal{P}(x)$. A solution  $y\in\mathrm{SOL}_\mathcal{P}(x)$ satisfying $\alpha\cdot m_\mathcal{P}(x,y)\geq m^*_\mathcal{P}(x)$ 
is also called an \emph{$\alpha$-approximate solution} for $x$.

We are now going to present a first key notion for this paper.

\begin{definition}
An \emph{$\alpha$-preserving reduction}, with $\alpha\ge 1$, is a pair of mappings $\mathop{\tt inst}_\mathcal{P}:I_\mathcal{P}\to I_\mathcal{P}$ and $\mathop{\tt sol}_\mathcal{P}$ which, given $y'\in \mathrm{SOL}_\mathcal{P}(\mathop{\tt inst}_\mathcal{P}(x))$,
produces some $y\in \mathrm{SOL}_\mathcal{P}(x)$ such that there are constants $a,b\geq 0$ satistying $a\leq \alpha\cdot  b$ and the following inequalities:
\begin{enumerate}
 \item $m^*_\mathcal{P}(\mathop{\tt inst}_\mathcal{P}(x)) + a \geq  m^*_\mathcal{P}(x)$,
 \item for each $y'\in \mathrm{SOL}_\mathcal{P}(\mathop{\tt inst}_\mathcal{P}(x))$, the corresponding solution $y=\mathop{\tt sol}_\mathcal{P}(y')$ satisfies:
 $m_\mathcal{P}(\mathop{\tt inst}_\mathcal{P}(x),y')+b\leq m_\mathcal{P}(x,y)$.
\end{enumerate}
\end{definition}

When referring to this definition, we mostly explicitly specify the constants $a$ and $b$ for ease of verification.
An important trivial example is given by a pair of identity mappings that are $\alpha$-preserving for any $\alpha\geq 1$.
Notice that a similar notion has been introduced, or implicitly used, in the context of  minimization problems in~\cite{BraFer1112,BraFer2013,FomLMS2012}.

\begin{theorem}\label{thm-alphaapprox-general}
 Let  $\mathcal{P}=(I_\mathcal{P},\mathrm{SOL}_\mathcal{P},m_\mathcal{P})$ be some maximization problem. If the pair $(\mathop{\tt inst}_\mathcal{P},\mathop{\tt sol}_\mathcal{P})$ describes an $\alpha$-preserving reduction
 and if, given some instance $x$, $y'\in \mathrm{SOL}_\mathcal{P}(\mathop{\tt inst}_\mathcal{P}(x))$ is  an $\alpha$-approximate solution for $\mathop{\tt inst}_\mathcal{P}(x)$, then 
 $y=\mathop{\tt sol}_\mathcal{P}(y')$  is an  $\alpha$-approximate solution for $x$.
\end{theorem}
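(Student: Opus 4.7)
The plan is a straightforward chain of inequalities assembled from the three ingredients in the definition. I would start from the quantity I want to bound, namely $m^*_\mathcal{P}(x)$, and successively invoke (i) the first inequality in the definition, (ii) the assumption that $y'$ is an $\alpha$-approximate solution for $\mathop{\tt inst}_\mathcal{P}(x)$, (iii) the constraint $a \leq \alpha \cdot b$, and finally (iv) the second inequality in the definition relating $y'$ to $y = \mathop{\tt sol}_\mathcal{P}(y')$.

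Concretely, first I would write $m^*_\mathcal{P}(x) \leq m^*_\mathcal{P}(\mathop{\tt inst}_\mathcal{P}(x)) + a$, using item~1 of the definition. Second, since $y'$ is $\alpha$-approximate for $\mathop{\tt inst}_\mathcal{P}(x)$, I replace $m^*_\mathcal{P}(\mathop{\tt inst}_\mathcal{P}(x))$ by $\alpha \cdot m_\mathcal{P}(\mathop{\tt inst}_\mathcal{P}(x), y')$ from above, obtaining
\[
m^*_\mathcal{P}(x) \leq \alpha \cdot m_\mathcal{P}(\mathop{\tt inst}_\mathcal{P}(x), y') + a.
\]
Third, I use $a \leq \alpha \cdot b$ to upper bound the trailing constant and factor out $\alpha$:
\[
m^*_\mathcal{P}(x) \leq \alpha \cdot \bigl( m_\mathcal{P}(\mathop{\tt inst}_\mathcal{P}(x), y') + b \bigr).
\]
Finally, item~2 of the definition gives $m_\mathcal{P}(\mathop{\tt inst}_\mathcal{P}(x), y') + b \leq m_\mathcal{P}(x, y)$, which yields $m^*_\mathcal{P}(x) \leq \alpha \cdot m_\mathcal{P}(x, y)$, i.e., $y$ is $\alpha$-approximate for $x$.

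There is really no conceptual obstacle here; the statement is in essence the soundness lemma that justifies having formulated the definition of $\alpha$-preserving reduction in this particular way. The only subtle point worth flagging in the write-up is the role played by the condition $a \leq \alpha \cdot b$: without it, the additive slack $a$ permitted in item~1 could not be absorbed into the multiplicative factor $\alpha$ on the objective, so one must explicitly highlight the use of this hypothesis when passing from the bound involving $+a$ to the bound involving $+\alpha b$. Apart from that, the proof is a three- or four-line calculation and needs no further machinery.
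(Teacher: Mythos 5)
Your proof is correct and uses exactly the same ingredients in the same order as the paper's proof (item~1 of the definition, the $\alpha$-approximability of $y'$, the condition $a\leq\alpha b$, and item~2); the only cosmetic difference is that you write the argument as a linear chain of inequalities on $m^*_\mathcal{P}(x)$ while the paper arranges it as a chain of bounds on the ratio $m^*_\mathcal{P}(x)/m_\mathcal{P}(x,y)$.
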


\begin{proof}
We have to prove that $\alpha\cdot m_\mathcal{P}(x,y)\geq m^*_\mathcal{P}(x)$. Now,
$$\frac{m^*_\mathcal{P}(x)}{m_\mathcal{P}(x,y)}\leq \frac{m^*_\mathcal{P}(\mathop{\tt inst}_\mathcal{P}(x)) + a}{m_\mathcal{P}(\mathop{\tt inst}_\mathcal{P}(x),y')+b}\leq 
\frac{\alpha m_\mathcal{P}(\mathop{\tt inst}_\mathcal{P}(x),y')+\alpha b}{m_\mathcal{P}(\mathop{\tt inst}_\mathcal{P}(x),y')+b}
=\alpha$$
as required.\qed
\end{proof}

This shows that an $\alpha$-preserving reduction leads to a special AP-reduction as defined in \cite{Ausetal99}.
But there, these reductions were mainly used to prove hardness results, as it is also the case of \cite{FomLMS2012} that we already mentioned.
However, we use this notion to obtain approximation algorithms.

The notion of an $\alpha$-preserving reduction was coined following the successful example of kernelization reductions known from Parameterized Complexity~\cite{DowFel2013}. One of the nice features of those is that they are usually compiled from simpler rules that are often based on some applicability conditions. In the following, we describe that this also works out for approximation.
We need two further notions to make this precise.

We call an $\alpha$-preserving reduction $(\mathop{\tt inst}_\mathcal{P},\mathop{\tt sol}_\mathcal{P})$ \emph{strict} if $|\mathop{\tt inst}_\mathcal{P}(x)|<|x|$ for all $x\in I_\mathcal{P}$,
and it is called \emph{polynomial-time computable} if the two mappings comprising the reduction can be computed in polynomial time.

The following lemma is relatively straightforward to prove. Yet, it contains an important
message: reduction rules can be composed so that the composition 

\begin{lemma}\label{lem-compose}
 If $(\mathop{\tt inst}_\mathcal{P},\mathop{\tt sol}_\mathcal{P})$ and $(\mathop{\tt inst}'_\mathcal{P},\mathop{\tt sol}'_\mathcal{P})$ are two $\alpha$-preserving reductions, then 
 the composition  $(i,s):=(\mathop{\tt inst}_\mathcal{P}\circ \mathop{\tt inst}'_\mathcal{P},\mathop{\tt sol}'_\mathcal{P}\circ \mathop{\tt sol}_\mathcal{P})$ is also an  $\alpha$-preserving reduction.
 If both  $(\mathop{\tt inst}_\mathcal{P},\mathop{\tt sol}_\mathcal{P})$ and $(\mathop{\tt inst}'_\mathcal{P},\mathop{\tt sol}'_\mathcal{P})$ are strict (polynomial-time computable, resp.), then the composition 
 $(i,s)$ is strict  (polynomial-time computable, resp.).
\end{lemma}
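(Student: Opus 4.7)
The plan is to take the constants $(a,b)$ witnessing $\alpha$-preservation for $(\mathop{\tt inst}_\mathcal{P},\mathop{\tt sol}_\mathcal{P})$ and the constants $(a',b')$ for $(\mathop{\tt inst}'_\mathcal{P},\mathop{\tt sol}'_\mathcal{P})$, and simply show that $A := a+a'$ and $B := b+b'$ witness $\alpha$-preservation for the composition $(i,s)$. Since $a\le\alpha b$ and $a'\le\alpha b'$, we immediately get $A = a+a' \le \alpha(b+b') = \alpha B$, which is the side condition required by the definition.

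First I would verify the optimum-value inequality. Applying the hypothesis on $(\mathop{\tt inst}'_\mathcal{P},\mathop{\tt sol}'_\mathcal{P})$ at the instance $x$ gives $m^*_\mathcal{P}(\mathop{\tt inst}'_\mathcal{P}(x))+a' \ge m^*_\mathcal{P}(x)$, and applying the hypothesis on $(\mathop{\tt inst}_\mathcal{P},\mathop{\tt sol}_\mathcal{P})$ at the instance $\mathop{\tt inst}'_\mathcal{P}(x)$ gives $m^*_\mathcal{P}(\mathop{\tt inst}_\mathcal{P}(\mathop{\tt inst}'_\mathcal{P}(x)))+a \ge m^*_\mathcal{P}(\mathop{\tt inst}'_\mathcal{P}(x))$. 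Chaining yields $m^*_\mathcal{P}(i(x))+A \ge m^*_\mathcal{P}(x)$.

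Next I would verify the solution-value inequality. Fix any $y''\in\mathrm{SOL}_\mathcal{P}(i(x))=\mathrm{SOL}_\mathcal{P}(\mathop{\tt inst}_\mathcal{P}(\mathop{\tt inst}'_\mathcal{P}(x)))$ and set $y' := \mathop{\tt sol}_\mathcal{P}(y'')\in\mathrm{SOL}_\mathcal{P}(\mathop{\tt inst}'_\mathcal{P}(x))$, so that $s(y'')=\mathop{\tt sol}'_\mathcal{P}(y')\in\mathrm{SOL}_\mathcal{P}(x)$. The first reduction applied at $\mathop{\tt inst}'_\mathcal{P}(x)$ gives $m_\mathcal{P}(i(x),y'')+b \le m_\mathcal{P}(\mathop{\tt inst}'_\mathcal{P}(x),y')$; the second reduction applied at $x$ with solution $y'$ gives $m_\mathcal{P}(\mathop{\tt inst}'_\mathcal{P}(x),y')+b' \le m_\mathcal{P}(x,s(y''))$. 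Adding the constants and chaining these two inequalities yields $m_\mathcal{P}(i(x),y'')+B \le m_\mathcal{P}(x,s(y''))$, as required.

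For the preservation of strictness, observe that $|\mathop{\tt inst}'_\mathcal{P}(x)|<|x|$ and $|\mathop{\tt inst}_\mathcal{P}(\mathop{\tt inst}'_\mathcal{P}(x))|<|\mathop{\tt inst}'_\mathcal{P}(x)|$ combine to give $|i(x)|<|x|$. For polynomial-time computability, the composition of two polynomial-time computable mappings is polynomial-time computable, and this applies to both coordinates of $(i,s)$.

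The whole argument is essentially a bookkeeping exercise, so there is no real obstacle; the only thing one has to be careful about is the direction of the composition for the solution map (outside-in on instances, inside-out on solutions) and the fact that the constant constraint $A\le\alpha B$ is preserved precisely because it is an additive condition on both sides with the same scaling factor $\alpha$.
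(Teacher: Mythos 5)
Your proof is correct and follows essentially the same route as the paper: it witnesses the composed reduction with the summed constants $(a+a',b+b')$, checks $a+a'\le\alpha(b+b')$, and chains the two optimum-value and the two solution-value inequalities. The only differences are cosmetic: you read $\mathop{\tt inst}_\mathcal{P}\circ\mathop{\tt inst}'_\mathcal{P}$ in the standard inside-out order while the paper applies $\mathop{\tt inst}_\mathcal{P}$ first (immaterial here, since both reductions are $\alpha$-preserving), and you write out the solution-value half explicitly where the paper dismisses it as ``similar and hence omitted.''
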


\begin{proof}
Consider a situation described as in the lemma, where the pair of numbers $(a,b)$ shows that $(\mathop{\tt inst}_\mathcal{P},\mathop{\tt sol}_\mathcal{P})$ is $\alpha$-preserving
and the pair of numbers  $(a',b')$ shows that $(\mathop{\tt inst}'_\mathcal{P},\mathop{\tt sol}'_\mathcal{P})$  is $\alpha$-preserving.
Clearly, if $x\in I_\mathcal{P}$, then $\mathop{\tt inst}_\mathcal{P}(x)\in I_\mathcal{P}$
and hence $(\mathop{\tt inst}_\mathcal{P}\circ \mathop{\tt inst}'_\mathcal{P})(x)=\mathop{\tt inst}'_\mathcal{P}(\mathop{\tt inst}_\mathcal{P}(x))\in I_\mathcal{P}$, as well.
A similar observation applies to the solutions (in the reversed order).
We now prove that the composition $\mathop{\tt inst}_\mathcal{P}\circ \mathop{\tt inst}'_\mathcal{P}$
is $\alpha$-preserving, testified by the pair of numbers $(a+a',b+b')$.
\begin{eqnarray*}
m_\mathcal{P}^*((\mathop{\tt inst}_\mathcal{P}\circ \mathop{\tt inst}'_\mathcal{P})(x)) + (a+a')
&=&
(m_\mathcal{P}^*((\mathop{\tt inst}'_\mathcal{P}(\mathop{\tt inst}_\mathcal{P}(x))) + a') +a\\
&\ge& m_\mathcal{P}^*(\mathop{\tt inst}_\mathcal{P}(x))+a\\
&\ge& m_\mathcal{P}^*(x)
\end{eqnarray*}
The computation for the bounds on the solution is similar and hence omitted.
The claim on composability of the strictness and the polynomial-time computability are easy to see. 
\qed
\end{proof}

By a trivial induction argument, the previous lemma generalizes to any finite number of reductions that we like to compose.

\paragraph{Conditional reductions} In the realm of Kernelization, reductions are often described in some conditional form:
$$\textbf{if\ }\textrm{condition\ }\textbf{then\ do\ }\textrm{action}$$
Our previous considerations apply also for this type of conditioned reductions, apart from the fact that
an instance may not change, assuming that the reduction was not applicable, which means that the condition was not true for that instance.\SV{ Further discussions can be found  in the long version of the paper.}
\LV{

First, we have to make clear what the notions of ``strictness'' and ``polynomial time computations''
refer to in the context of reduction rules with conditions.
``Strictness'' now means that the input will be shortened if the condition is met, and ``polynomial time'' means two things:
a) the condition can be checked in polynomial time and b)  the possibly triggered action can be performed in polynomial time.
Moreover, often there is a finite collection of conditioned reductions.
These can be combined in quite a natural way into a single conditioned reduction.
This is formally described in the following lemma.

\begin{lemma}
 Assume that, for each $1\leq i\leq n$,
 $$\textbf{if\ }\textrm{condition}_i\textbf{\ then\ do\ }\textrm{action}_i$$
 is a conditioned $\alpha$-preserving reduction.
 Then, these can be combined into a single conditioned $\alpha$-preserving reduction $\langle \textrm{combi-condition},\textrm{combi-action}\rangle$
 as follows:
 $$\textbf{if\ }\exists i(\textrm{condition}_i)\textbf{\ then\ do\ }\textrm{perform some applicable\ action}_i$$
 If  all original conditioned reductions are strict  (polynomial-time computable, resp.), then the combined reduction is strict  (polynomial-time computable, resp.).
\end{lemma}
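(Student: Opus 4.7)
The plan is to argue by a simple case distinction on whether the combi-condition is triggered on a given input instance $x$. First, if no condition$_i$ holds on $x$, then combi-action does not fire, so the pair $(\mathop{\tt inst},\mathop{\tt sol})$ for the combined rule acts as the identity on both instances and solutions. As already noted after the definition, the identity is an $\alpha$-preserving reduction for every $\alpha\ge 1$, witnessed by $a=b=0$, so the required inequalities hold trivially in this case.

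Next, if some condition$_i$ holds, then by the selection rule built into combi-action we pick an index $i$ for which condition$_i$ is applicable and perform exactly action$_i$. On this sub-class of instances the combined reduction coincides pointwise with the $i$-th original reduction, which is $\alpha$-preserving by assumption with some witnessing constants $(a_i,b_i)$ satisfying $a_i\le\alpha\cdot b_i$. Hence both inequalities of the definition carry over verbatim for the combined reduction on this instance, with the same witnessing pair. Taking $a:=\max_i a_i$ and $b:=\max_i b_i$ (or simply recording the constants per case, which is what the definition allows) gives uniform witnesses showing that $\langle\textrm{combi-condition},\textrm{combi-action}\rangle$ is $\alpha$-preserving.

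For strictness, observe that whenever combi-action fires, some action$_i$ fires, and by the strictness assumption on that rule $|\mathop{\tt inst}_i(x)|<|x|$; hence the size strictly decreases whenever the combined condition is met, which is exactly what strictness of the conditioned combined rule requires. For polynomial-time computability, note that by assumption each condition$_i$ can be tested in polynomial time and each action$_i$ can be executed in polynomial time. The combined rule first tests the $n$ conditions sequentially (polynomial time, as $n$ is fixed) to detect applicability and to select an index $i$, and then runs the single action$_i$ on $x$, together still polynomial; the same argument applies to the corresponding $\mathop{\tt sol}$-mapping.

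I do not anticipate a real obstacle here: once the case distinction ``no condition fires'' versus ``some condition fires'' is made, each case reduces immediately to a situation already covered by the hypotheses or by the trivial identity observation. The only point that deserves a line of care is to make the selection rule in combi-action well-defined when several conditions are simultaneously applicable, e.g.\ by picking the smallest such index $i$; once this is fixed, the combined mapping is a genuine function and the preceding argument applies without change.
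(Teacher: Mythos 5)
The paper states this lemma without any proof, so there is no official argument to compare against; your case distinction (no condition fires, so the rule acts as the identity with $a=b=0$; some condition fires, so the combined rule coincides pointwise with the selected rule, whose witnesses carry over) is clearly the argument the authors intended to leave to the reader, and your treatment of strictness and polynomial-time computability matches the paper's conventions for conditioned rules (``shortened if the condition is met'', condition checkable and action executable in polynomial time).

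The one detail that does not survive scrutiny is the claim that $a:=\max_i a_i$ and $b:=\max_i b_i$ yield \emph{uniform} witnesses. Enlarging $a$ only helps inequality~1, but enlarging $b$ \emph{strengthens} inequality~2, namely $m_\mathcal{P}(\mathop{\tt inst}_\mathcal{P}(x),y')+b\leq m_\mathcal{P}(x,y)$; on an instance handled by a rule whose witness has a small $b_i$, the inequality with $\max_j b_j$ in place of $b_i$ may simply be false. Worse, uniform constants need not exist at all: if one rule is tight only at $(a_1,b_1)=(0,0)$ and another only at $(a_2,b_2)=(1,1)$, with $\alpha=1$, a uniform pair would have to satisfy $a\geq 1$ and $b\leq 0$ together with $a\leq\alpha b$, which is impossible. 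The correct reading---which your parenthetical already supplies---is that the witnessing constants may be chosen per instance, i.e., per fired rule; this is all that the proof of Theorem~\ref{thm-alphaapprox-general} uses, since its chain of inequalities is evaluated on one instance at a time (and it is also how Lemma~\ref{lem-compose} is applied downstream). So drop the $\max$ construction, keep the per-case witnesses, and your proof is complete.
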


Now, we can present a general recipe how to obtain a polynomial-time factor-$\alpha$ approximation based on $\alpha$-preserving reductions.
The previous lemma shows that the
use of a single reduction in the formulation of the next theorem does not lose any generality.
\begin{theorem}
 Assume that $\mathcal{P}$ is some maximation problem. Suppose that
$$\textbf{if\ }\textrm{condition}(x)\ \textbf{then\ do\ }\textrm{action}(x)$$
is some conditioned $\alpha$-preserving, strict, polynomial-time computable reduction.
Further assume that there is some polynomial-time computable factor-$\alpha$ approximation algorithm $A$
for $\mathcal{P}$, restricted to instances from $\{x\in I_\mathcal{P}\mid \neg \textrm{condition}(x)\}$. Then,
there is a polynomial-time  computable factor-$\alpha$ approximation algorithm  for all instances.
\end{theorem}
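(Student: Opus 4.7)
The plan is to iterate the conditioned reduction until the condition fails, then apply the given algorithm $A$ and lift the solution back through the chain of reductions.

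First, given an input instance $x \in I_\mathcal{P}$, I would build a sequence $x = x_0, x_1, \ldots, x_k$ by setting $x_{i+1} = \mathop{\tt inst}_\mathcal{P}(x_i)$ whenever $\textrm{condition}(x_i)$ holds, and stopping at the first $x_k$ with $\neg\textrm{condition}(x_k)$. Because the reduction is strict, $|x_{i+1}| < |x_i|$, so $k \le |x|$ and the loop terminates. Because the reduction is polynomial-time computable, both the check of the condition at each step and the construction of $x_{i+1}$ cost polynomial time, so the full sequence is produced in polynomial time overall.

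Next, since $\neg\textrm{condition}(x_k)$, I can invoke the hypothesized algorithm $A$ on $x_k$ in polynomial time and obtain an $\alpha$-approximate solution $y_k \in \mathrm{SOL}_\mathcal{P}(x_k)$. I then lift back by setting $y_{i-1} = \mathop{\tt sol}_\mathcal{P}(y_i)$ for $i = k, k-1, \ldots, 1$, yielding $y_0 \in \mathrm{SOL}_\mathcal{P}(x)$. Again this is polynomial-time computable by assumption, and only $k$ such lifting steps are performed.

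The correctness follows from composability. By Lemma~\ref{lem-compose}, iterated applied to the $k$ individual $\alpha$-preserving reductions used along the chain, the composite pair $(\mathop{\tt inst}_\mathcal{P}^{(k)}, \mathop{\tt sol}_\mathcal{P}^{(k)})$ sending $x$ to $x_k$ and sending $y_k$ back to $y_0$ is itself an $\alpha$-preserving reduction. Therefore, by Theorem~\ref{thm-alphaapprox-general}, since $y_k$ is an $\alpha$-approximate solution for $x_k = \mathop{\tt inst}_\mathcal{P}^{(k)}(x)$, the lifted solution $y_0 = \mathop{\tt sol}_\mathcal{P}^{(k)}(y_k)$ is an $\alpha$-approximate solution for $x$.

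The only subtle point is bookkeeping: formally, one applies Lemma~\ref{lem-compose} by induction on $k$, using strictness just to bound $k$ (so the induction is finite and the polynomial-time claims go through); the inductive step itself is immediate from the lemma. Thus no genuine obstacle arises beyond carefully stringing together the ingredients already proved.
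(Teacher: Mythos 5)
Your proposal is correct and follows essentially the same route as the paper's own proof: iterate the strict reduction until the condition fails (bounding the number of iterations by $|x|$), run $A$ on the irreducible instance, and lift the solution back using the composability of $\alpha$-preserving reductions (Lemma~\ref{lem-compose}) together with Theorem~\ref{thm-alphaapprox-general}. Your write-up is if anything slightly more explicit than the paper's in naming Theorem~\ref{thm-alphaapprox-general} as the final step, but there is no substantive difference.
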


\begin{proof}
 The desired algorithm should work as follows. Given an instance $x$:
 \begin{enumerate}
  \item As long as possible, some $\alpha$-preserving reductions are performed. This yields the sequence of instances
  $x=x_0$, $x_1$, \dots, $x_n$.
 
  \item Then, $A$ is applied to the reduced instance $x':=x_n$.
  \item As $y_n:=y':=A(x')$ is an $\alpha$-approximate solution for $x'=x_n$, we can successively construct $\alpha$-approximate solutions $y_{n-1}$ for $x_{n-1}$, \dots,
  $y_1$ for $x_1$ and finally
  $y:=y_0$ for $x=x_0$.
  \item Return $y$ as approximate solution for $x$.
 \end{enumerate} As compositions of $\alpha$-preserving reductions yield $\alpha$-preserving reductions (maintaining some desirable properties), as shown in Lemma~\ref{lem-compose}, any $\alpha$-approximate solution to $x_n$ can be turned into an $\alpha$-approximate solution for $x$. 
Hence, all claimed properties directly follow by our previous considerations, apart from the polynomial-time claim.
Here, observe as the reductions are strict, $n\leq |x|$, so that the \textbf{while}-loop terminates after a polynomial number of steps.\qed
\end{proof}
}

The general strategy that we follow can be sketched as follows:
\begin{enumerate}
\item Apply (strict, poly-time computable) $\alpha$-preserving reduction rules as long as possible.
\item Possibly modify the resulting graph so that it meets some requirements from known combinatorial results on the graph parameter of interest.
\item Compute some solution for the modified graph that 
satisfies the mentioned combinatorial bounds.
\item Construct from this solution a good approximate solution for the original instance.
\end{enumerate}

In order to illustrate the use of this strategy, let us elaborate on \textsc{Nonblocker}, matching a  result from \cite{NguSHSMZ2008}. Actually, conceptually this algorithm is even simpler than the one we present for \textsc{Harmless Set} in particular in the following sections. This goes along the lines of the kernelization result by Dehne \emph{et al.}~\cite{Dehetal2006}, but kernelization needs no
 constructive proof of the combinatorial backbone result;  the non-constructive proof of\SV{~\cite{McCShe89}}\LV{ \cite{Bla73a,McCShe89}} is hence sufficient.

\begin{enumerate}
\item Delete all isolates. (If the resulting graph is of minimum degree at least two, we are ready to directly apply the algorithm of Nguyen \emph{et al.}~\cite{NguSHSMZ2008}.) This rule is $\alpha$-preserving for any $\alpha\geq 1$ (with $a=b=1$).
\item Merge all leaf neighbors into a single vertex. Again, this rule is $\alpha$-preserving for any $\alpha\geq 1$ (with $a=b=0$ for a single merge and hence also for a finite sequence of merges). 
\item Delete all leaves but one, which is $x$. This yields the graph $G$ of order $n_G$.
\item Create a copy $G'$ of the graph $G$; call the vertices in the new graph by priming the names of vertices of $G$. Let $H$ be the graph union of $G$ and $G'$ plus the edge $xx'$. $H$ is of minimum degree at least two by construction. 
\item Take the algorithm of Nguyen \emph{et al.}~\cite{NguSHSMZ2008} to obtain a dominating set $D_H$ of $H$ satisfying $|D_H|\leq \frac{2}{5}n_H$.
Should the solution $D_H$ contain $x$ or $x'$, it is not hard to modify it to contain the leaf neighbors $y$ or $y'$, instead.
\item Hence, $D_G=V_G\cap D_H$ is a dominating set for $G$ with $|D_G|\leq \frac{2}{5}n_G$.
 Trivially, $N_G=V_G\setminus D_G$ is a nonblocker solution for $G$ that is $\frac{5}{3}$-approximate.
 \item As the merging and deletion reductions are $\alpha$-preserving for each $\alpha\geq 1$, we can safely undo them and hence obtain a $\frac{5}{3}$-approximate solution for the original graph instance.
\end{enumerate}

\LV{Better approximation algorithms for \textsc{Nonblocker} have been obtained by  Chen \emph{et al.}~\cite{CheENRRS2013} (with a factor of 1.41) and by 
Athanassopoulos \emph{et al.}~\cite{AthCKK2009} (with a factor of 1.244).}


\LV{Nguyen \emph{et al.} used a slightly different way to obtain their approximation algorithm. 
Let us
reformulate and sketch the result from \cite{NguSHSMZ2008} within our framework.
As a reduction rule, they only remove isolates; these would be put into the nonblocker set anyways. 
The combinatorial result aimed at is the one exhibited by Blank\LV{~\cite{Bla73a}} and (independently) by  McCuaig and Shepherd~\cite{McCShe89} that shows that any graph (with seven exceptional graphs) of order $n$ with minimum degree of at least two has a dominating set with at most $\frac{2}{5}n$ vertices. This result is used by
first modifying the graph by deleting all leaves and then interconnecting the leaf neighbors so that the minimum degree two requirement is met.
It is then shown that it is possible to construct a nonblocker set for $G$, given a dominating set $D_H$ satisfying $|D_H|\leq \frac{2}{5}n_H$ for the modified graph $H$. An essential ingredient is a new proof of the mentioned result from  \cite{McCShe89} that is in fact a polynomial-time algorithm to compute $D_H$ within $H$.
This result was also used by our version of this algorithm given above.}

\section{Harmless Set}
\label{sec:Harmless}

We are now turning towards \textsc{Harmless Set} as the most elaborate example of our methodology.
 First, we are going to present the combinatorial backbone of our result. 
Let $S_2(G)$ be the set all vertices of degree two within $G$.

\begin{theorem}\label{LamWei} (Lam and Wei \cite{LamWei2007})
Let $G$ be a graph of order $n_G$ and of minimum degree at least two such that
$G[S_2(G)]$ decomposes into $K_1$- and $K_2$-components.
Then, $\gamma_t(G)\leq n_G/2$.
\end{theorem}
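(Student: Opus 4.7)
The plan is to prove this by induction on $n_G$, splitting the argument according to whether $S_2(G)$ is empty or not. If $S_2(G)=\emptyset$, then $\delta(G)\ge 3$, and one can invoke the classical result (Archdeacon et al., independently Chvátal--McDiarmid style) that every graph with minimum degree at least three admits a total dominating set of size at most $n/2$. Thus the interesting case is when $S_2(G)$ is nonempty, and we may pick a connected component $C$ of $G[S_2(G)]$, which by hypothesis is either a $K_1$ or a $K_2$.

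For the inductive step, my idea is to remove $V(C)$ from $G$ (together with some carefully chosen edge additions to repair the minimum-degree and $S_2$-structure conditions), obtaining a smaller graph $G'$ that still satisfies the theorem's hypotheses. Applying induction gives a total dominating set $D'$ of $G'$ with $|D'|\le n_{G'}/2$, and one then extends $D'$ to a total dominating set $D$ of $G$ by adding at most $|V(C)|/2$ new vertices that are responsible for totally dominating the removed vertices and the vertices of $V(G')$ whose domination was provided by $V(C)$ in the original graph. Concretely, if $C=\{v\}$ with neighbors $v_1,v_2$ (both of degree at least~3 in $G$ since $v$ is isolated in $G[S_2]$), we delete $v$ and, when needed, add the edge $v_1v_2$; if $C=\{u,v\}$ with external neighbors $u',v'$, we delete $u$ and $v$, and add or use edges at $u',v'$ to keep them of degree at least~2 and to force them into a subsequent total dominating set.

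The main obstacle, and where the delicate work lies, is preserving the structural condition on $G'[S_2(G')]$ after the local surgery. Removing vertices of degree 2 may drop their neighbors into $S_2$, and two such freshly created degree-2 vertices might become adjacent in a way that creates a forbidden $K_1\cup K_2$-violating component (for example a path $P_3$ of degree-2 vertices). Handling this demands a careful case analysis of the neighborhoods of $V(C)$, distinguishing how $v_1,v_2$ (respectively $u',v'$) are connected to the rest of $G$ and to other elements of $S_2(G)$; in the worst configurations, one may have to add compensating edges to push neighbors back above degree 2, or select a different component $C$ to reduce on.

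Once these reductions are shown to respect all hypotheses, the bookkeeping is straightforward: each removal step eliminates $|V(C)|\in\{1,2\}$ vertices and augments the TDS by at most $\lceil|V(C)|/2\rceil$ vertices (after a small amortized argument that combines the $K_1$ case with an adjacent ``companion'' contribution, since a single deletion must be balanced by two deletions somewhere else). Summing over all reduction steps and over the final base-case application of the $\delta\ge 3$ bound on the remaining graph yields $\gamma_t(G)\le n_G/2$, completing the induction.
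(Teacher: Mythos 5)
Your overall strategy --- induction on the order, a base case via the known $\delta\ge 3$ bound, and local surgery on a component of $G[S_2(G)]$ --- matches the general shape of the Lam--Wei argument. (Note that the paper does not reprove this theorem: it cites \cite{LamWei2007} and only sketches how to turn that proof into a polynomial-time algorithm.) However, your proposal has a genuine gap exactly where the theorem's difficulty lies, and it is not one that the deferred ``careful case analysis'' can close in the form you describe. The bookkeeping of your reductions fails. For a $K_1$-component $C=\{v\}$ you delete one vertex but must in general add one vertex to the total dominating set (nothing forces a neighbor of $v$ into $D'$, and an added edge $v_1v_2$ cannot be used for domination in $G$ if it is absent there), so you only get $|D|\le (n_G-1)/2+1=(n_G+1)/2>n_G/2$. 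For a $K_2$-component $\{u,v\}$ with external neighbors $u',v'$, totally dominating both $u$ and $v$ may require two additions ($u'$ and $v'$, or $u$ together with one of its neighbors), again breaking the bound; adding edges at $u',v'$ cannot ``force them into a subsequent total dominating set.'' The invariant that actually drives the induction is that each surgery deletes a set $Q$ of vertices while adding at most $|Q|/2$ vertices to the solution, i.e.\ $\gamma_t(G)\le\gamma_t(G[V\setminus Q])+|Q|/2$; your $Q$'s are too small to achieve this. This is precisely why Lam and Wei do \emph{not} reduce on single degree-two vertices: when $S_2(G)$ is a nonempty independent set they first pass to an edge-minimal graph satisfying the hypotheses and then run a case analysis on the length of a shortest cycle ($3,4,5,6$, or larger), performing a different and larger surgery in each case; when $G[S_2(G)]$ has a $K_2$-component, their Lemma~6 removes a set $Q$ that is strictly larger than the four vertices you consider. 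The ``amortized argument'' you invoke to repair the $K_1$ case is never specified, and there is no local pairing of a single deletion with ``two deletions somewhere else'' that works in general.

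A second, smaller issue: preserving the hypothesis that $G'[S_2(G')]$ decomposes into $K_1$- and $K_2$-components is not a side condition that can be postponed. Without the edge-minimality preprocessing, the repairs you propose (adding $v_1v_2$, or edges at $u',v'$) can themselves create chains of three consecutive degree-two vertices or leave vertices of degree one, and ``selecting a different component $C$'' is not guaranteed to avoid this. In short, the skeleton of your induction is the right one, but the two ingredients that constitute the actual proof --- the girth-based case distinction in the independent-$S_2$ case and the $2$-to-$1$ deletion-versus-addition accounting --- are exactly the parts left open, so the proposal does not yet establish the theorem.
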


The proof of this theorem is non-constructive, as it uses tools from extremal combinatorics.\SV{ In the long version of this paper, w}\LV{ W}e show \LV{now }how to obtain a polynomial-time algorithm that actually computes a total dominating set (TDS) $D$ with $|D|\leq n_G/2$ under the assumptions of Theorem~\ref{LamWei}.
\LV{
Let $G=(V,E)$ be a  graph with (*) minimum degree at least two and no three consecutive vertices of degree two.
This last condition is obviously equivalent to requiring that $G[S_2(G)]$ decomposes into connected components of the form $K_1$ or $K_2$. 
As connected components can be computed
consecutively, we can assume that $G$ is connected.

First, we greedily remove edges, as long as
the graph still satisfies (*).
A TDS computed for the resulting graph is also a TDS for the original graph.
For simplicity, we can hence further assume
that no edges from $G$ can be removed without
violating (*). This is a technical condition
needed for applying some of the Lemmas from \cite{LamWei2007}.

We now differentiate two main cases:
\begin{itemize}
\item If $S_2(G)$ is an independent set in $G$, i.e., $G[S_2(G)]$ has no edges, then we have to differentiate further
cases when the shortest cycle in $G$ is of length 3, 4, 5, 6, or larger.
In each of the cases, Lam and Wei show how to construct a graph $G'$ smaller than $G$ that also satisfies (*).
\item Otherwise, $G[S_2(G)]$ contains a $K_2$-component. Starting out from such a path of length one, the proof of \cite[Lemma 6]{LamWei2007} shows how to construct a set $Q$ of vertices such that the graph $G'=G[V\setminus Q]$ also satisfies (*) and, moreover, $\gamma_t(G)\leq \gamma_t(G')+
\frac{|Q|}{2}$ is satisfied.
\end{itemize}

As some optimum TDS can be surely easily computed for small graphs, the sketched procedure allows to recursively compute some TDS solution for $G$. Notice in particular that the proofs of Lam and Wei show how to construct a solution for the calling instance from the one obtained for the called instance (in the recursion).
Also, it is shown (as explicitly indicated in the second case above) that the claimed bound on the solution size easily follows by induction.

Hence, we can state the following constructive version of the combinatorial result of Lam and Wei:

\begin{theorem}\label{LamWei-constructive}
For a given graph $G=(V,E)$ of order $n_G$ that satisfies (*),
one can compute a TDS $D\subseteq V$ with
$|D|\leq\frac{n_G}{2}$ in polynomial time.
\end{theorem}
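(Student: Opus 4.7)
The plan is to promote Lam and Wei's inductive argument into an explicit polynomial-time recursion, doing exactly the work already outlined in the paragraphs preceding the theorem but being careful about algorithmic details at each step. First, I would assume the graph is connected: otherwise, I run the algorithm on each connected component separately and take the union of the resulting TDSs, which still respects the bound $n_G/2$ since the bound is additive over components. Within a connected component, I would perform a preprocessing loop that greedily tests each edge $e \in E$ in polynomial time and deletes $e$ if the resulting graph still satisfies (*). When the loop terminates, no further edge removal is possible without violating (*), so the technical assumptions underlying the lemmas of Lam and Wei are in force. Because any TDS of the resulting graph $H \subseteq G$ is also a TDS of $G$ of the same size, working with $H$ loses nothing.

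Next, I would implement the case analysis as a recursive procedure $\textsc{BuildTDS}(G)$ returning a TDS of size at most $n_G/2$. On any graph of sufficiently small constant order, I simply enumerate all subsets and return a minimum TDS; this handles the base case. Otherwise, I branch according to the structure of $G[S_2(G)]$. If $G[S_2(G)]$ contains a $K_2$-component, I locate such a pair in polynomial time and follow the construction in the proof of Lemma~6 of Lam and Wei to extract the vertex set $Q$ together with a partial solution $D_Q$; then I recursively compute $D' := \textsc{BuildTDS}(G[V \setminus Q])$ and return $D_Q \cup D'$. If instead $S_2(G)$ is independent in $G$, I compute the girth of $G$ (polynomial-time by BFS from every vertex) and, depending on whether this girth is $3$, $4$, $5$, $6$, or at least $7$, I apply the corresponding subcase in Lam and Wei's argument to locate a small local structure, construct a strictly smaller graph $G'$ that still satisfies (*), recurse to obtain a TDS $D'$ for $G'$, and lift $D'$ to a TDS $D$ for $G$ using the constructive recipe in the relevant subcase.

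Correctness of the size bound follows by induction on $n_G$, exactly mirroring the inequalities already used by Lam and Wei: in both main cases, the construction ensures $|D| \le |D'| + (n_G - n_{G'})/2$ with $|D'| \le n_{G'}/2$, yielding $|D| \le n_G/2$. For running time, I observe that every recursive call strictly decreases the vertex count by at least one, and in fact by a constant number in each subcase, so the recursion depth is $O(n_G)$; since each step performs only polynomial-time local pattern searches (short cycles, degree-two neighborhoods) and polynomial-time solution lifting, the overall procedure runs in polynomial time.

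The main obstacle I anticipate is verifying that every subcase of Lam and Wei's existence argument is intrinsically algorithmic, i.e.\ that every structural object they assert to exist (the set $Q$, the smaller graph $G'$, and the local configurations around short cycles) is in fact witnessed by a pattern that can be found by direct enumeration over $O(1)$-sized neighborhoods. I expect this to be routine because their proofs are already phrased in terms of concrete local modifications rather than non-constructive tools such as probabilistic or counting arguments; once this is checked case by case, the explicit lifting rules they give for the TDS automatically yield the polynomial-time computation asserted in the theorem.
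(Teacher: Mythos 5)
Your proposal is correct and follows essentially the same route as the paper's own argument: reduce to connected components, greedily delete edges while preserving (*), then recurse via Lam and Wei's two main cases (a $K_2$-component in $G[S_2(G)]$ handled by their Lemma~6, versus $S_2(G)$ independent handled by a case analysis on the girth), with brute force on constant-size base cases and the size bound following by induction. The paper likewise leaves the case-by-case verification that Lam and Wei's local constructions are effective to the reader, and only adds the remark that a rough analysis gives an $O(n_G^8)$ running time.
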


Observe that a quick analysis of the sketched algorithm indicates a bound of $O(n_G^8)$ for the running time, as one has to actually verify that there are no short
cycles in $G$ to match the case analysis.
Supposedly, a complete re-analysis of the combinatorial argument could reveal better algorithms, but for the proof of concept of our methodology, this analysis is sufficient here.
}
%
 
 Our approximation algorithm 
for \textsc{Harmless Set}
 is 
 based on 
obtaining a (small enough) TDS in a graph $H$ obtained from
the input $G$ after a number of modifications (mainly vertex deletions).
In the reduction from $G$ to $H$, we distinguish between the number of deleted 
vertices $d$ (to get from $G$ to $H$) and the number of vertices $a$ added to 
convert the TDS $D_H$ to 
$D_G$.

\begin{theorem}
\label{TDS2UHS}
Let $G$ be a graph of order $n_G$ and let $H$ be a graph of order $n_H$ obtained 
from $G$ by deleting $d$ vertices and possibly adding some edges. Let $D_G$ and $D_H$ be TDS solutions of $G$
and~$H$, respectively, 
such that 
$a = |D_G|-|D_H| \leq d$. 
If $|D_H| \leq c\cdot n_H$ and $d\leq \gamma_t(G)$, then $V(G)\setminus D_G$ is a 
harmless
set of $G$ whose size  $n_G-|D_G|$ is within a factor of $(1-c)^{-1}$ from optimum.
\end{theorem}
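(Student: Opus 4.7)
The plan is as follows. First I would observe that the statement is really a chain of straightforward inequalities, once the right quantities are identified. A harmless set of $G$ is by definition the complement of a total dominating set, so $V(G)\setminus D_G$ is indeed a harmless set, and the optimum harmless set has cardinality $n_G-\gamma_t(G)$. Hence, to establish the factor-$(1-c)^{-1}$ claim, it suffices to prove
\begin{equation*}
n_G-|D_G|\;\ge\;(1-c)\bigl(n_G-\gamma_t(G)\bigr).
\end{equation*}

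Second, I would bound $|D_G|$ from above, using all the given data at once. Since $H$ has $n_H=n_G-d$ vertices, the assumption $|D_H|\le c\cdot n_H$ becomes $|D_H|\le c(n_G-d)$. Combining this with $|D_G|=|D_H|+a$ and $a\le d$, I get
\begin{equation*}
|D_G|\;\le\;c(n_G-d)+d\;=\;cn_G+(1-c)d.
\end{equation*}
Subtracting from $n_G$,
\begin{equation*}
n_G-|D_G|\;\ge\;(1-c)n_G-(1-c)d\;=\;(1-c)(n_G-d).
\end{equation*}

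Third, I would use the hypothesis $d\le\gamma_t(G)$ to conclude
\begin{equation*}
(1-c)(n_G-d)\;\ge\;(1-c)\bigl(n_G-\gamma_t(G)\bigr),
\end{equation*}
which chained with the previous inequality gives exactly what is needed. Dividing,
\begin{equation*}
\frac{n_G-\gamma_t(G)}{n_G-|D_G|}\;\le\;\frac{1}{1-c},
\end{equation*}
which is the claimed approximation ratio.

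There is essentially no combinatorial obstacle here: the work has already been done by the assumptions on $D_H$, on $a\le d$, and on $d\le\gamma_t(G)$. The only point that deserves a brief remark is the direction of the inequality; one must notice that replacing the absolute bound $|D_H|\le cn_H$ by an additive correction of $a\le d$ vertices is harmless precisely because $d$ itself is controlled by $\gamma_t(G)$, and hence the two ``losses'' — the $cn_H$ term and the $d$ term — combine into a clean factor $(1-c)^{-1}$ relative to the harmless-set optimum $n_G-\gamma_t(G)$.
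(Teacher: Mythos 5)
Your proposal is correct and follows essentially the same chain of inequalities as the paper's own proof: bound $|D_G|=|D_H|+a$ by $cn_G+(1-c)d$ using $n_H=n_G-d$ and $a\le d$, then invoke $d\le\gamma_t(G)$ to compare against the optimum $n_G-\gamma_t(G)$. The only cosmetic difference is that you apply $d\le\gamma_t(G)$ after subtracting from $n_G$ rather than before, which is equivalent.
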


\begin{proof} 
As $n_H=n_G-d$,
$|D_G| = |D_H|+a \leq c(n_G-d)+a = cn_G+(a-cd) 
\leq cn_G+ d-cd = cn_G + (1-c)d \leq cn_G + (1-c)\gamma_t(G)$.
Hence,  $n_G-|D_G| \geq n_G - cn_G - (1-c) \gamma_t(G) = (1-c)(n_G - \gamma_t(G))$.
This immediately yields an approximation factor of $(1-c)^{-1}$.\qed
\end{proof}

In the following\LV{ section}, we will present reduction rules that produce a graph
$G$ with the property (**) that each vertex of degree bigger than one has at most one leaf neighbor. The surgery that produces a graph $H$ from $G$ as indicated in Theorem~\ref{TDS2UHS} includes removing all $d$ leaves and adding edges to ensure that $H$ has minimum degree of two and satisfies that each component of $H[S_2(H)]$ has diameter at most one.
Notice that all leaf neighbors  in $G$ belong to some optimum TDS of $G$ without loss of generality. Due to (**), $\gamma_t(G)\geq d$ as required.
Moreover, given some TDS solution $D_H$ for $H$, we can produce a valid TDS solution $D_G$ for $G$ by adding all $d$ leaf neighbors to $D_H$.
Notice that Theorem~\ref{TDS2UHS}  leads to a factor-2 approximation algorithm for \textsc{Harmless Set} based on\SV{  a polynomial-time, constructive version of} Theorem~\SV{\ref{LamWei}}\LV{\ref{LamWei-constructive}}.


%


\LV{
In the following section, we are going to describe the reduction rules necessary to produce a graph to which we could apply the mentioned combinatorial results.

\section{Reduction Rules for \textsc{Harmless Set}}
\label{sec:Harmless-Reductions}}
%
%
%
Now, we list $\alpha$-preserving reductions for \textsc{Harmless Set}. 
We start with two very simple rules.

\paragraph{\bf Isolate Reduction\SV{.}}
If there is some isolated vertex, produce the instance $(\{x\},\emptyset)$
that has trivially no solution.
If there is some isolated edge $xy$, produce that instance $G[V\setminus \{x,y\}]$ from $G=(V,E)$.

For the correctness of this rule, observe that a graph with isolated vertices has no total dominating set at all.

\paragraph{\bf Leaf Reduction\SV{.}} If there are two leaf vertices $u,v$ with common neighbor $w$, then delete $u$. (It would go into the harmless set.)


\begin{observation}\label{obs-harmless-isolates-leaves}
The 
Isolate Reduction (for edges) and the 
Leaf Reduction are $\alpha$-preserving for any $\alpha\geq 1$.
\end{observation}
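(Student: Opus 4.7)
The plan is to check the definition of an $\alpha$-preserving reduction for each rule separately, exhibiting explicit constants $a,b\geq 0$ with $a\leq \alpha\cdot b$ for every $\alpha\geq 1$. In both cases the essential ingredient is a simple forced-vertex observation: certain vertices must belong to every TDS of the input, which pins down the relationship between $\gamma_t(G)$ and $\gamma_t(\mathop{\tt inst}_\mathcal{P}(G))$, and hence between the optima $m^*_\mathcal{P}(G)=n_G-\gamma_t(G)$ and $m^*_\mathcal{P}(\mathop{\tt inst}_\mathcal{P}(G))$.

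For the Isolate Reduction on an isolated edge $xy$, I would observe that $x$ and $y$ dominate only each other, so every TDS of $G$ contains both of them. Lifting via $\mathop{\tt sol}_\mathcal{P}(y'):=y'$ (equivalently, adding $\{x,y\}$ to the corresponding TDS $D'$ of $G'=G[V\setminus\{x,y\}]$) is well defined and yields a size-preserving bijection between TDSs, so $\gamma_t(G)=\gamma_t(G')+2$. Combining this with $n_G=n_{G'}+2$ gives $m^*_\mathcal{P}(G)=m^*_\mathcal{P}(G')$ and $m_\mathcal{P}(G,\mathop{\tt sol}_\mathcal{P}(y'))=m_\mathcal{P}(G',y')$, so the constants $a=b=0$ fit the definition for every $\alpha\geq 1$.

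For the Leaf Reduction on leaves $u,v$ with common neighbor $w$, I would argue that $w$ lies in every TDS of $G$ (forced by $u$), and that the same holds in $G'=G-u$ because the surviving leaf $v$ still forces $w$ in. Consequently, every TDS of $G'$ is already a TDS of $G$, since $u$'s only neighbor $w$ is included, so $\gamma_t(G)=\gamma_t(G')$. Taking $\mathop{\tt sol}_\mathcal{P}(y'):=y'\cup\{u\}$ then yields both $m^*_\mathcal{P}(G)=m^*_\mathcal{P}(G')+1$ and $m_\mathcal{P}(G,\mathop{\tt sol}_\mathcal{P}(y'))=m_\mathcal{P}(G',y')+1$; the constants $a=b=1$ therefore satisfy the definition, and the inequality $a\leq \alpha\cdot b$ reduces to $\alpha\geq 1$, exactly as claimed.

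The only real pitfall is orienting the two inequalities in the definition correctly: $a$ bounds how much the optimum may drop under the reduction, while $b$ bounds how much a lifted solution must improve on the reduced one. Once the forced-vertex observations above are in place, the admissible constants can be read off directly, and no further case analysis is required.
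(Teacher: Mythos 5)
Your proof is correct and follows essentially the same route as the paper's: it exhibits the same constants ($a=b=0$ for the Isolate Reduction on edges, $a=b=1$ for the Leaf Reduction) and justifies them by the same forced-vertex observations (endpoints of an isolated edge lie in every TDS; the common neighbor $w$ is forced by the remaining leaf, so leaves may w.l.o.g.\ be excluded from the TDS). Your version merely spells out the lifting maps and the two inequalities more explicitly than the paper does.
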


\LV{\begin{proof} 
The 
Isolate Reduction  is $\alpha$-preserving by setting $a=b=0$ in the definition.
In other words, endpoints of isolated edges must belong to any TDS solution.
The Leaf Reduction is $\alpha$-preserving by setting $a=b=1$ in the definition.
In other words, w.l.o.g., leaves do not belong to some TDS solution, except when there is a $K_2$-component in the graph. 
\LV{\qed}\end{proof}}

Hence from now on, no vertex can have two leaf neighbors.

\LV{Actually, we could generalize the 
Leaf Reduction towards the following rule:

\paragraph{\bf Twin Reduction}
Recall that vertices $u$ and $v$ are said to be {\em true twins} if $N[u] = N[v]$ and
{\em false twin} if $N(u) = N(v)$. 

\begin{itemize}
\item If there are two vertices $u$ and $v$ such that $N[u]=N[v]$, i.e., they form true twins, then delete $v$
(it would go into the harmless set).
\item If there are two vertices $u$ and $v$ such that $N(u)=N(v)$, i.e., they form false twins, then delete $v$
(it would go into the harmless set).
\end{itemize}

As we are not using this rule in some crucial manner in what follows, we 
present the following result without proof. 

\begin{theorem}
The Twin Reduction is $\alpha$-preserving for any $\alpha\geq 1$.
\end{theorem}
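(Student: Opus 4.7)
The plan is to prove, for both branches of the rule, that the reduction is $\alpha$-preserving with $a=b=1$; since $1\leq\alpha\cdot 1$ for every $\alpha\geq 1$, this establishes the claim. Following the template of the Leaf Reduction, I would define the solution mapping by $\mathop{\tt sol}_\mathcal{P}(H')=H'\cup\{v\}$, which has size $|H'|+1$ and so automatically yields the second inequality with $b=1$. To verify that $H'\cup\{v\}$ is a harmless set of $G$, let $D''=V(G-v)\setminus H'$ be the TDS of $G-v$ complementary to $H'$; one only needs to check that $v$ is dominated by $D''$ inside $G$. Since $D''$ is a TDS of $G-v$ and $u\in V(G-v)$, $u$ has some neighbor $w\in D''$ in $G-v$, so $w\in N_G(u)\setminus\{v\}$; the twin identity $N_G(u)\setminus\{v\}=N_G(v)\setminus\{u\}$ (true twins) or $N_G(u)=N_G(v)$ (false twins) places $w\in N_G(v)$, so $v$ is dominated.

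For the first inequality it suffices to show $\gamma_t(G-v)\leq\gamma_t(G)$, which I plan to establish by an exchange argument starting from a minimum TDS $D$ of $G$. If $v\notin D$, then $D$ itself is a TDS of $G-v$, since every $x\in V(G-v)$ had a $D$-neighbor $y\ne v$ in $G$, which persists in $G-v$. Otherwise $v\in D$, and the task is to eliminate $v$ without increasing the size. For false twins, the subcase $u,v\in D$ cannot occur at a minimum: since $v\notin N(u)$, the $D$-neighbors of $u$ all lie in $D\setminus\{v\}$, and any former $v$-dependent $x\in N(v)=N(u)$ becomes dominated by $u$, making $D\setminus\{v\}$ a strictly smaller TDS of $G$. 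Hence only the subcase $v\in D$, $u\notin D$ remains, and $(D\setminus\{v\})\cup\{u\}$ has size $|D|$ and is a TDS of $G-v$: former $v$-dependents are covered by $u$, and $u$ retains its own $D$-neighbor (necessarily distinct from $v$).

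The true-twins case follows the same pattern with two adjustments. For the subcase $v\in D$, $u\notin D$, the construction $(D\setminus\{v\})\cup\{u\}$ again works, provided $u$ has a $D$-neighbor distinct from $v$; the scenario $N_G(u)\cap D=\{v\}$ is impossible, because then the twin identity forces $N_G(v)\cap D=\emptyset$, contradicting $v\in D$ being dominated. For the subcase $u,v\in D$, the removal $D\setminus\{v\}$ may fail to be a TDS of $G-v$ if $u$'s unique $D$-neighbor happens to be $v$, so I would swap $v$ for any $z\in N_G(u)\setminus\{v\}$, yielding $(D\setminus\{v\})\cup\{z\}$ of size $|D|$: $u$ is dominated by $z$, $z$ by $u$, and every former $v$-dependent $x\in N_G(v)\setminus\{u\}$ by $u$ via $N_G(v)\setminus\{u\}=N_G(u)\setminus\{v\}$. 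The main obstacle is the need for such a $z$ to exist, i.e.\ $\deg_G(u)\geq 2$; this is ensured once the Isolate Reduction has been applied, since otherwise $\{u,v\}$ would form an isolated $K_2$ component. This mild ordering constraint is absorbed without loss by Lemma~\ref{lem-compose} on composability of $\alpha$-preserving reductions.
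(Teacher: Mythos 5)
Your proof is correct, but note that the paper deliberately states this theorem \emph{without} proof (``As we are not using this rule in some crucial manner in what follows, we present the following result without proof''), so there is no official argument to compare against. Your argument supplies exactly what one would expect: the solution-lifting direction ($b=1$, adding $v$ to the harmless set, with the twin identity $N_G(u)\setminus\{v\}=N_G(v)\setminus\{u\}$ resp.\ $N_G(u)=N_G(v)$ guaranteeing that $v$ is still totally dominated) mirrors the paper's one-line justification of the Leaf Reduction, and the exchange argument establishing $\gamma_t(G-v)\le\gamma_t(G)$ is carried out carefully over all subcases, including the observation that for false twins a minimum TDS cannot contain both twins. You also correctly identify the one genuine boundary case --- true twins forming an isolated $K_2$, where $G-v$ has no TDS at all --- and discharge it by assuming the Isolate Reduction has been applied first; this ordering assumption is consistent with how the paper sequences its rules and is legitimately absorbed by the composability lemma. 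No gaps.
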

}

A 
chain with one leaf endpoint is a {\em pendant chain}.
A {\em floating chain} is a chain with two leaves.
A {\em support vertex} is a non-pendant endpoint of a pendant chain. 
 Support vertices may have more than one pendant chain. We shall reduce
the length of pendant chains to at most two, based on the following reduction
rules. The first one actually generalizes the Isolate Reduction.

\paragraph{\bf Floating Chain Reduction\SV{.}} Delete all floating chains. 


\begin{observation}
\label{floating-path}
The Floating Chain Reduction\LV{ rule} is $\alpha$-preserving for any $\alpha\geq 1$.
\end{observation}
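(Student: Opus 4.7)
The plan is to exhibit explicit $\mathop{\tt inst}_\mathcal{P}$ and $\mathop{\tt sol}_\mathcal{P}$ maps, and then read off constants $a$ and $b$ that verify the definition. The key structural fact I would establish first is that a floating chain is necessarily an entire connected component of $G$: by definition, each interior vertex of the chain has degree two in $G$ and so has no neighbor outside the chain, while each endpoint is a leaf, whose single neighbor must be the next vertex of the chain. Hence every floating chain is a component of $G$ isomorphic to some path $P_k$ with $k \geq 2$; the degenerate case $k=1$ of an isolated vertex is already handled (as unsolvable) by the Isolate Reduction and therefore does not arise here.

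Next, let $P_1,\ldots,P_t$ denote the floating chains of $G$ and put $\mathop{\tt inst}_\mathcal{P}(G) = G' := G \setminus \bigcup_i V(P_i)$. For $\mathop{\tt sol}_\mathcal{P}$, I would fix once and for all an optimum harmless set $H_i$ of each $P_i$, which is computable in polynomial (indeed linear) time since $P_i$ is a path; then, given $y' \in \mathrm{SOL}_\mathcal{P}(G')$, return $y := y' \cup \bigcup_i H_i$. Because the $P_i$ are separate components from $G'$ and from one another, $y$ is a valid harmless set of $G$, and conversely every harmless set of $G$ decomposes as the disjoint union of a harmless set of $G'$ with a harmless set of each $P_i$. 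This decomposition gives the two identities
\[
 m^*_\mathcal{P}(G) \;=\; m^*_\mathcal{P}(G') + \sum_{i=1}^{t} m^*_\mathcal{P}(P_i), \qquad m_\mathcal{P}(G,y) \;=\; m_\mathcal{P}(G',y') + \sum_{i=1}^{t} |H_i|.
\]

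Setting $a = b = \sum_{i} m^*_\mathcal{P}(P_i) \geq 0$, both conditions of the definition of an $\alpha$-preserving reduction hold with equality, and the required inequality $a \leq \alpha\cdot b$ collapses to $1 \leq \alpha$, which holds by hypothesis. No substantial obstacle arises; the only nontrivial point is the structural observation that a floating chain must be a full component of $G$, after which additivity over components does all the work. This mirrors in spirit the argument already used for the Isolate and Leaf Reductions in Observation~\ref{obs-harmless-isolates-leaves}.
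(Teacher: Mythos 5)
Your proposal is correct and follows essentially the same route as the paper: delete the floating chain(s), which are full path components, and take $a=b$ equal to the size of an optimum harmless set of those paths, computable in polynomial time. You simply spell out the component-wise additivity and the structural fact that a floating chain is a whole component, which the paper's one-line proof leaves implicit.
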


\LV{\begin{proof} 
$G'$ is obtained from
$G$ by deleting a floating chain. For the chain, the numbers $a=b$ can be
computed (optimally) in polynomial time; they correspond to the size of
optimum solutions for the floating chain component.\qed
\end{proof}}

\paragraph{\bf Long Chain Reduction\SV{.}}
Assume that $G$ is a graph that contains a path $x-u-v-w-y$, where $u,v,w$ are three consecutive vertices of degree two, where $|N(y)|\geq 2$. Then, construct the graph $G'$ by
\LV{
\begin{itemize}
\item deleting $x,u,v,w$ and
 
\item connecting $y$ to all vertices in $N(x)\setminus \{u\}$ (without creating double edges).
\end{itemize}

This corresponds to }merging $x$ and $y$ and deleting $u,v,w$.
\LV{This Long Chain Reduction resembles the folding rule known for \textsc{Vertex Cover} (in Parameterized Complexity, see~\cite{DowFel2013}).}

\begin{theorem}
\label{thm-chain}
The Long Chain Reduction is $\alpha$-preserving for any $\alpha\geq 1$.
\end{theorem}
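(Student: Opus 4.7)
The plan is to show that the Long Chain Reduction is $\alpha$-preserving with the uniform constants $a = b = 2$, which is valid for every $\alpha \geq 1$ since then $a = b \leq \alpha b$. Harmless sets are complements of total dominating sets (TDS), and $|V(G')| = |V(G)| - 4$, so the two conditions of an $\alpha$-preserving reduction amount to: (i) a constructive map sending any TDS $D'$ of $G'$ to a TDS $D$ of $G$ with $|D| \leq |D'| + 2$; and (ii) $\gamma_t(G') \leq \gamma_t(G) - 2$, which will follow by running the analogous backward construction on an optimum TDS of $G$.

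For (i), because $D'$ is a TDS of $G'$, the vertex $y$ has some $D'$-neighbor $z$, where by construction $N_{G'}(y) = (N(y) \setminus \{w\}) \cup (N(x) \setminus \{u\})$. If $z \in N(y) \setminus \{w\}$, we set $D := D' \cup \{u, v\}$; otherwise $z \in N(x) \setminus \{u\}$, and we set $D := D' \cup \{v, w\}$. Both choices produce a TDS of $G$ of size $|D'| + 2$: in each case the interior vertices $u, v, w$ are mutually dominated by the added pair, while the chain endpoints are covered either from the chain (through $u$ or through $w$) or via the original dominator $z$, which is still a neighbor of $x$ or $y$ in $G$.

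For (ii), let $D$ be any TDS of $G$ and set $C := D \cap \{x, u, v, w, y\}$. A short enumeration of the ways to dominate the induced path $x - u - v - w - y$ shows $|C| \geq 2$, with equality only if $C \in \{\{u, v\}, \{v, w\}\}$. In those minimum cases, $D \setminus C$ is already a TDS of $G'$ of size $|D| - 2$: for $C = \{u, v\}$ the vertex $y$ must admit an outer $D$-dominator $r_y \in N(y) \setminus \{w\}$ (because $w \notin D$), and $r_y$ remains a neighbor of $y$ in $G'$; the case $C = \{v, w\}$ is symmetric, the outer $D$-dominator of $x$ being rerouted to $y$ via the new edges from $y$ to $N(x) \setminus \{u\}$ in $G'$.

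The main obstacle is the remaining case $|C \cap \{x, u, v, w\}| \geq 3$. Here the restriction $D \setminus \{x, u, v, w\}$ already has at most $|D| - 3$ vertices, but $y$ may not be dominated in $G'$; moreover, if $x \in D$ and some outer $p \in N(x) \setminus \{u\}$ was dominated only by $x$ in $D$, then $p$ also becomes undominated in $G'$. I plan to repair both issues by augmenting the restriction with at most two auxiliary vertices: $y$ itself (which dominates every such $p$ simultaneously, because $p \in N_{G'}(y)$) and some $r \in N(y) \setminus \{w\}$, whose existence is guaranteed by the hypothesis $|N(y)| \geq 2$ and which dominates $y$ when $N(x) \setminus \{u\}$ and $N(y) \setminus \{w\}$ both miss $D$. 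A finite subcase analysis based on whether $x$, $N(x) \setminus \{u\}$, and $N(y) \setminus \{w\}$ meet $D$ shows that whenever both auxiliary vertices are actually required, $|C \cap \{x, u, v, w\}|$ must in fact be at least $4$, so the total size of the extracted TDS never exceeds $|D| - 2$. This confirms condition (ii) with $a = 2$ and completes the proof.
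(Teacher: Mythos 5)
Your translation of the problem into total dominating sets is faithful, and your reformulation of the two conditions as (i) a lift of any TDS $D'$ of $G'$ to a TDS of $G$ of size $|D'|+2$ and (ii) $\gamma_t(G')\leq\gamma_t(G)-2$ is correct. However, the lifting map in (i) has a genuine gap: it only ever adds vertices from $\{u,v,w\}$, never $x$, and it only checks domination of the five chain vertices. The danger lies with the vertices of $N_G(x)\setminus\{u\}$: in $G'$ these are adjacent to $y$, so if $y\in D'$, such a vertex $p$ may have $y$ as its \emph{only} dominator in $D'$; back in $G$ the edge $py$ does not exist, the corresponding neighbor of $p$ is $x$, and since your $D$ never contains $x$, $p$ ends up with no neighbor in $D$. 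Concretely, take the chain $x-u-v-w-y$ with $N(x)=\{u,p,q\}$, $pq\in E$, $N(y)=\{w,r,s\}$, $rs\in E$. Then $D'=\{y,p\}$ is a TDS of $G'$, the unique $D'$-neighbor of $y$ is $p\in N(x)\setminus\{u\}$, and your rule outputs $D=\{y,p,v,w\}$, in which $p$ (with $N_G(p)=\{x,q\}$) is not totally dominated. The paper avoids this by branching on whether $y$ belongs to the lifted solution: when $y\in D'$ it adds $x$ together with one of $u,w$ (so that $x$ re-dominates $N(x)\setminus\{u\}$), and only when $y\notin D'$ does it add $v$ together with one of $u,w$. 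Since condition 2 of the definition must hold for \emph{every} solution of the reduced instance, this is not a repairable detail but the heart of that direction.

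Part (ii) is also incomplete. Your enumeration jumps from $|C|=2$ to $|C\cap\{x,u,v,w\}|\geq 3$ and skips configurations with $|C|\geq 3$ but $|C\cap\{x,u,v,w\}|=2$, such as $C=\{u,v,y\}$, $\{x,u,y\}$ or $\{x,w,y\}$; in the last of these, deleting $x$ and $w$ may remove the only dominator of $y$, forcing an extra vertex back in and saving only one vertex overall. The closing claim that both auxiliary vertices are needed only when $|C\cap\{x,u,v,w\}|\geq 4$ is asserted rather than proved, and the $\{x,w,y\}$ configuration shows the tension already arises at intersection size two. The paper handles this direction by first normalizing the optimum solution with local swaps (e.g., exchanging a dominator of $y$ for $w$, or moving $x$ out of the dominating set in favor of $v$) so that only the two clean patterns $\{u,v\}$ and $\{v,w\}$ remain; your argument needs an analogous normalization step before extraction.
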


\begin{proof}
Let $G$ be the original graph and $G'$ the graph obtained from $G$ by deleting the path $u,v,w$ and merging $x$ and $y$ as described by the rule. We show that 
$a=b=2$ works out in our case by considering several cases.

\smallskip\noindent
(a) Let $C$ be a maximum harmless set (HS) for $G$. \SV{The special case when $N(x)=\{u\}$ is easy to verify. }\LV{Let us first briefly discuss what 
happens if $N(x)=\{u\}$.
Then, it is not hard to see that an optimum solution $C$ would contain $x$ 
and $w$, but not $u$ and $v$.
Merging $x$ and $y$ and deleting $u,v,w$ is now equivalent to deleting the 
whole pending path $x-u-v-w$. As $w\in C$, it does not dominate $y$, so that 
$C'=C\setminus \{x,w\}$ is a valid harmless set for $G'$.
 
}In the following discussion, we can hence assume that $x$ has at least two 
neighbors. We now consider cases whether or not $x\in C$ or $y\in C$.

\LV{\begin{itemize}

\item}\SV{\smallskip\noindent (a1)} Assume that $x\in C$ and $y\in C$. Hence, $u,w$ are not dominated neither 
by $x$ nor by $y$. As $C$ is maximum, we can assume
$|C\cap \{u,v,w\}|=1$, as  
$\min\{|N(x)|,|N(y)|\}\geq 2$; hence, if
all of $u,v$ and $w$ are in $V\setminus C$, then we can replace $w$ by 
another neighbor of $y$ and obtain another optimum solution. 
Then, $C'=C\setminus\{x,u,v,w\}$ is a HS of $G'$, with $|C'|=|C|-2$.

\LV{\item}\SV{\noindent (a2)} Assume that $x\notin C$ and $y\notin C$. First, let us discuss the 
possibility that $u\notin C$ and $w\notin C$. As $C$ is maximum, the purpose 
of this is to dominate (i) $v$ and (ii) $x$ and $y$. To accomplish (i), either 
$u\notin C$ or $w\notin C$ would suffice.
However, as $C$ is maximum, condition (ii) means that $N(x)\setminus C=\{u\}$ 
and that $N(y)\setminus C=\{w\}$. By our assumptions, 
$\min\{|N(x)|,|N(y)|\}\geq 2$. Hence, there is  a vertex $z\in N(y)$, $z\neq w$. 
Now, $\tilde C=(C\setminus\{z\})\cup\{w\}$ is also a maximum HS satisfying $\{v,w\}\subseteq \tilde C$. From now on, we assume that $|C\cap \{u,v,w\}|=2$ and 
that $|((N(x)\cup N(y))\setminus (\{u,w\}\cup C)|\geq 1$\LV{ (in other words,
at least one of $x$ and $y$ has a neighbor in $V\setminus C$ other than
$u$ and $w$, respectively)}. 
Hence, $C'=C\setminus\{u,v,w\}$ is a HS of $G'$ with $|C'|=|C|-2$.

\LV{\item}\SV{\noindent (a3)} Assume now that $x\in C$ and $y\notin C$. (Clearly, the case that 
$x\notin C$ and $y\in C$ is symmetric.)
As $u$ is not dominated by $x$, either (i) $\{u,v\}\subseteq V\setminus C$ or 
(ii) $\{v,w\}\subseteq V\setminus C$.
In case (i), $x$ is dominated by $u$, but $y$ must (still) be dominated by 
some vertex from $N(y)\setminus \{w\}$.
In case (ii), symmmetrically $y$ is dominated by $w$, but $x$ must be 
dominated by some vertex from $N(x)\setminus \{u\}$.
In both cases, $\tilde C=(C\setminus \{x\})\cup\{v\}$ is another maximum 
harmless set of $G$.
This leads us back to the previous item (i.e., $|C'|=|C|-2$.)
\LV{\end{itemize}}

Summarizing, we have shown that from $C$ we can construct a harmless set 
$C'$ for $G'$ with  $|C'|=|C|-2$.

\smallskip\noindent
(b) Conversely, assume $C'$ is some harmless set for 
$G'$. We distinguish two cases:

\LV{\begin{itemize}

\item}\SV{\smallskip\noindent (b1)} Assume that $y\in C'$. Then, $y$ is dominated by some $z$ in its 
neighborhood (in $G'$). We consider two cases according to the situation 
in $G$.
(i) If $z\in N(x)$, then $C=C'\cup\{x,u\}$ is a HS in $G$. 
(ii) If $z\in N(y)$, then $C=C'\cup\{x,w\}$ is a HS in $G$.
In both cases, $|C|=|C'|+2$.

\LV{\item}\SV{\noindent (b2)} If $y\notin C$, then again $y$ is dominated by some $z$ in its 
neighborhood (in $G'$). We perform the same case distinction as in the 
previous case:
(i) If $z\in N(x)$, then $C=C'\cup\{u,v\}$ is a HS  in $G$. 
(ii) If $z\in N(y)$, then $C=C'\cup\{v,w\}$ is a HS in $G$.  
In both cases, $|C|=|C'|+2$.

\LV{\end{itemize}}

\noindent
(c) The reasoning from (b) shows that, if $C$ is an optimum solution for $G$, 
then $C'$ as obtained in part (a) of this proof is an optimum solution for $G'$.\LV{
Namely, assume that there would be a harmless set $C^*$ for $G'$ with 
$|C^*|>|C'|$. Then, according to (b), we can construct a harmless set of $G$ 
with $|C^*|+2>|C|$ many vertices, contradicting the  maximality of $C$. }\qed
\end{proof}

Similarly, one sees the correctness of the 
following rule.
\paragraph{\bf Cycle Chain Reduction\SV{.}}
If $G$ is a graph that contains a cycle $x-u-v-w-x$, where $u,v,w$ are three consecutive vertices of degree two, then construct\LV{ the graph} $G'$ by
 deleting $u$.\LV{
 \begin{observation}
 The cycle chain reduction  is $\alpha$-preserving for any $\alpha\geq 1$.
 \end{observation}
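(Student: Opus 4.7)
The plan is to follow the template used for the Long Chain Reduction (Theorem~\ref{thm-chain}), arguing directly with harmless sets, and to show that the reduction is $\alpha$-preserving with $a = b = 1$; since this gives $a \le \alpha b$ for every $\alpha \ge 1$, the claim for all $\alpha \ge 1$ will follow. The natural solution-lifting map is $\mathop{\tt sol}(C') = C' \cup \{u\}$, reflecting the intuition that deleting $u$ turns $v$ into a leaf of $G'$ and forces $w$ into every TDS of $G'$, so recovering a harmless set of $G$ from one of $G'$ costs exactly one extra vertex.

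For the lifting direction, given a harmless set $C'$ of $G'$ I would first establish two structural facts: $w \notin C'$ (because $v$, now a leaf in $G'$, is totally dominated only through $w$) and $v \notin C'$ or $x \notin C'$ (because $w \in V'\setminus C'$ itself needs a neighbor in the TDS, and its only neighbors in $G'$ are $v$ and $x$). These are precisely what is required to verify that $C := C' \cup \{u\}$ is a harmless set of $G$: the vertex $u$ is totally dominated by $x$ or $v$, while every other vertex of $V'\setminus C'$ keeps its dominator when moved back into $G$, since $G$ and $G'$ differ in edges only at $x$ and $v$, both of which merely gain the single extra neighbor $u \in C$. Hence $|C| = |C'| + 1$, giving $b = 1$. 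For the reverse direction $m^*(G') + 1 \ge m^*(G)$, I would start from an arbitrary optimum harmless set $C$ of $G$ and produce an equally large optimum $C^{*}$ with $u \in C^{*}$. If $u \in C$ we are done. Otherwise, the fact that $u$ is dominated in $G$ forces $x \notin C$ or $v \notin C$; if additionally $w \notin C$, then a short check shows that $u$ could be added to $C$ without breaking any domination, contradicting maximality, and if instead $w \in C$, then the swap $C^{*} := (C \setminus \{w\}) \cup \{u\}$ preserves the harmless-set property and the size $|C^{*}| = |C|$, exploiting the symmetric roles of $u$ and $w$ on the short cycle $x$-$u$-$v$-$w$-$x$. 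Finally, $C^{*} \setminus \{u\}$ is a harmless set of $G'$ of size $|C| - 1$, as required, yielding $a = 1$.

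The main obstacle will be the bookkeeping in the exchange step: when swapping $w$ out of and $u$ into the harmless set, I must verify that every vertex in the new TDS still has a neighbor there, with particular care for $x$ (whose further neighborhood in $G$ is unconstrained) and for any vertex whose unique dominator had been $w$. Once this finite case analysis on the short cycle together with the immediate surroundings of $x$ is completed, both inequalities in the definition of an $\alpha$-preserving reduction are checked with $a = b = 1$, and the observation follows.
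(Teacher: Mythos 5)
Your proof is correct and follows essentially the same route as the paper's own (much terser) argument: a direct case analysis on the four-cycle showing that some optimum harmless set of $G$ contains $u$ while $w$ lies outside every harmless set of $G'$, yielding the constants $a=b=1$. Your version merely makes explicit the exchange argument hidden behind the paper's ``w.l.o.g.''
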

 
 \begin{proof}
 An optimum harmless set for $G$ will put exactly two out of the three vertices $u,v,w$ into the harmless set.
 W.l.o.g., let these be $u$ and $v$. Conversely, $w$ and $x$ would go into the total dominating set. Also, in the reduced graph, $v$ will be in the harmless set, while $x$ and $w$ will be in the total dominating set. This shows the claim with constants $a=b=1$.\qed 
 \end{proof}}
 
 \SV{\bigskip}
 Finally, we deal with support vertices with multiple pendant chains. Assuming
the Long Chain Reduction has been applied, any pendant chain is of length
two or less. Accordingly, a support vertex
where two of more pendant chains
meet does belong to some optimum solution. The following rule 
makes this idea more precise.

\paragraph{\bf Pendant Chain Reduction\SV{.}}
Assume that $G=(V,E)$ is a graph that contains two pendant chains with
common endpoint $v$ of which at least 
one path is of length two. Then, construct the graph 
$G'=(V',E')$ by deleting one of the two pendant chains, keeping one which is of length two.

\begin{theorem}
\label{thm-pendant}
The Pendant Chain Reduction 
is $\alpha$-preserving for any $\alpha\geq 1$.
\end{theorem}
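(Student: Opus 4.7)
The plan is to mirror the two-direction analysis used for the Long Chain Reduction (Theorem~\ref{thm-chain}) and certify that the constants $a=b=1$ witness that the rule is $\alpha$-preserving for every $\alpha \geq 1$. Denote the retained pendant chain by $v-p-q$ (with $q$ the leaf and $p$ the interior vertex), and denote the deleted chain by $v-p'-q'$ (when of length two, with $q'$ a leaf) or by $v-r$ (when of length one). I would first record a structural fact used in both directions: in every harmless set $C'$ of $G'$, the vertex $p$ belongs to $V' \setminus C'$, because $q$ is a leaf whose unique potential total-dominator is $p$; consequently, if $v \in C'$ then $q \notin C'$, for otherwise $p$ would have no total-dominator in $G'$.

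Direction (a), showing $a=1$: given an arbitrary harmless set $C$ of $G$, I would set $C' := C \cap V'$. In the length-two case, $q'$ is a leaf forcing $p' \in V \setminus C$, and in the length-one case, $r$ is a leaf forcing $v \in V \setminus C$; so the vertices lost from $C$ when restricting to $V'$ number at most one (namely $q'$ respectively $r$), giving $|C'| \geq |C|-1$. The only vertex whose domination in $G'$ might require reconsideration is $v$, which in $G$ could have been dominated via $p'$ (respectively $r$); however, by the structural fact $p \in V' \setminus C'$ and $p$ is a $G'$-neighbor of $v$, so $v$ remains dominated. Applied to an optimum $C$, this yields $m^*_\mathcal{P}(G') + 1 \geq m^*_\mathcal{P}(G)$.

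Direction (b), defining $\mathop{\tt sol}$ and showing $b=1$: given any harmless set $C'$ of $G'$, I would build a harmless set $C$ of $G$ with $|C| = |C'|+1$, splitting on whether $v \in C'$. If $v \notin C'$, set $C := C' \cup \{q'\}$ (respectively $C' \cup \{r\}$); the added leaf's unique neighbor ($p'$ or $v$) lies in $V \setminus C$, so the leaf is dominated, and no existing domination is disturbed. If $v \in C'$, perform a swap: $C := (C' \setminus \{v\}) \cup \{q, q'\}$ (respectively $(C' \setminus \{v\}) \cup \{q, r\}$), noting $q \notin C'$ by the structural fact, so indeed $|C| = |C'| + 1$. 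Validity in $G$ would be verified locally: $v$ (now in $V \setminus C$) is dominated by $p$; $q$ (now in $C$) is still dominated by $p$; the added leaf is dominated by its only neighbor ($p'$ in the length-two case, $v$ in the length-one case, both now in $V \setminus C$); and no vertex outside the chains is affected, because the only $V' \setminus C'$-status change concerns $q$, and as a leaf $q$ dominated only $p$, which is now rescued by $v$.

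The hard part will be the swap case of Direction (b): one must check that promoting $v$ from the harmless set into the total dominating set while simultaneously demoting $q$ from the total dominating set into the harmless set does not destroy any other vertex's domination. The key observations making this manageable are that $q$ is a leaf (so $q$ contributed to dominating only $p$) and that $v$, previously in $C'$, contributed no domination at all in $G'$ that could be lost when $v$ is moved into the total dominating set of $G$; beyond this purely local surgery, the remainder of the solution is untouched.
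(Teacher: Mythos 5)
Your proof is correct and takes essentially the same route as the paper: both certify the rule with constants $a=b=1$, both rest on the observation that the leaf $q$ of the retained length-two chain forces $p$ into the total dominating set, and both lift a solution of $G'$ back to $G$ by inserting the leaf of the deleted chain. You are in fact more careful than the paper on one point --- the paper justifies the lifting step by asserting that $v$ can never belong to a harmless set of the reduced graph, which is not literally true (taking $v\in C'$ is feasible whenever $p$ is dominated by $q$ instead of by $v$), and your explicit swap $C=(C'\setminus\{v\})\cup\{q,q'\}$ in that case is exactly the repair needed to make the argument go through for \emph{every} feasible $C'$, as the definition of an $\alpha$-preserving reduction requires.
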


\LV{\begin{proof} 
Let $v-x-y$ and (a) $v-z-t$ (or (b) just $v-t$) be two pendant paths of $G$.
Then $y$ belongs to some maximum harmless set $C$ of $G$ while $x$ belongs
to $V(G)\setminus C$. Similarly, $z$ (if existent) belongs to some maximum harmless set $C$ of $G$ while $t$ belongs
to $V(G)\setminus C$.
It follows that the definition of $\alpha$-preserving reduction can be applied with
$a=b=1$. Notice that, because we keep $v-x-y$,
neither $v$ nor $x$ will belong to any harmless set solution for the reduced graph. Hence, adding $t$ to the harmless set solution of the reduced graph is always possible, resulting in a valid harmless set for the original graph.
\qed
\end{proof}
}

We are now \SV{ready}\LV{in the position} to apply  Theorem \ref{TDS2UHS}.\LV{

\begin{observation}}
Assume the graph $G=(V,E)$ is reduced according to the reduction rules 
described so far.
Hence, $G$ satisfies\LV{ the following properties}:
\LV{
\begin{itemize}
\item}\SV{(a)}  $G$ contains no chain of three vertices of degree two. \LV{
\item}\SV{(b)} By the Leaf Reduction rule, any vertex has at most 
one leaf neighbor.
\LV{\end{itemize}
\end{observation}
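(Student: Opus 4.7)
The plan is to show, for each of the two properties, that the presence of a ``bad'' configuration would trigger one of the reduction rules listed so far, so that a fully reduced graph cannot contain it.

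Property (b) is essentially built into the Leaf Reduction: if some vertex $w$ had two distinct leaf neighbors $u, v$, then $u, v$ would be two leaves sharing the common neighbor $w$, which is exactly the premise of the Leaf Reduction. Since $G$ is reduced, this cannot occur.

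For property (a), I would argue by contradiction. Suppose $u-v-w$ is a chain of three degree-two vertices in $G$, and let $x$ be the neighbor of $u$ other than $v$ and $y$ the neighbor of $w$ other than $v$ (such $x,y$ exist because $u, w$ have degree two). I would then case-split on the structure of $\{x,y\}$: (i) if $x = y$, then $x-u-v-w-x$ is a $4$-cycle and the Cycle Chain Reduction would apply; (ii) if $x \ne y$ and at least one of $x, y$ has degree at least two, then (possibly after relabeling so that $|N(y)| \ge 2$) the path $x-u-v-w-y$ meets the premise of the Long Chain Reduction; (iii) otherwise both $x$ and $y$ are leaves, so the connected component containing $u, v, w$ consists exactly of the path $x-u-v-w-y$ (because $u,v,w$ have all their incidences inside the chain and $x, y$ are leaves), which is a floating chain, and the Floating Chain Reduction would apply. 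In every case we contradict the hypothesis that no reduction rule is applicable, proving (a).

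The only real subtlety — and hence the main (mild) obstacle — is verifying exhaustiveness of the case distinction, in particular showing in case (iii) that when both boundary vertices $x, y$ are leaves the chain is actually \emph{isolated} and therefore qualifies as a floating chain in the sense of the definition. This is immediate from the degree constraints on $u, v, w, x, y$: none of them can have a neighbor outside the path, so the whole component is the path $x-u-v-w-y$, which indeed is a chain with two leaf endpoints.
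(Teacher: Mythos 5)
Your proposal is correct and follows essentially the same route as the paper, which states this observation without any explicit proof as an immediate consequence of exhaustive rule application; your case analysis (Cycle Chain / Long Chain / Floating Chain, with the relabelling to get $|N(y)|\geq 2$ and the check that two leaf endpoints force an isolated floating chain) is exactly the intended justification, and it is exhaustive. A purely pedantic aside: an isolated triangle of three degree-two vertices would survive all the rules, but since a \emph{chain} is by definition an induced path, such a triangle does not contradict the observation as stated.
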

}
Let $G'$ be a graph isomorphic to $G$ so that each vertex
$v$ of $G$ corresponds to a vertex $v'$ of $G'$, under the assumed 
isomorphism $f: V(G) \longrightarrow V(G')$.
 We construct a graph $H$ obtained from the disjoint union of $G$ and 
$G'$ simply by adding edges between each leaf neighbors vertex $v$ of $G$ 
with $v' = f(v) \in V(G')$.
Then, we remove all leaves.
%

Due to the application of\LV{ the} Pendant Chain Reduction\LV{ rule} to $G$ 
(and $G'$),
the addition of edges between corresponding  leaf neighbors in $G$ and $G'$
does not introduce induced cycles with more than two 
consecutive degree-two vertices.

To the resulting graph $H$, apply Long Chain Reduction as long as possible. 
Notice that an application of this rule does never decrease degrees, adds two 
vertices to the solution and removes four vertices of the graph. 

This results in a graph $H'$ of order $n_{H'}$ with minimum degree at 
least two containing  no chain of three vertices of degree two.
Hence, we can apply the (algorithmic) version of Theorem~\ref{LamWei} that 
returns a TDS $D_{H'}$ for $H'$ with $2|D_{H'}|\geq n_{H'}$.
Undoing a certain number of  Long Chain Reductions, say,  $c$, that we applied, we obtain a TDS $D_H$ for 
$H$ with $2|D_H|=2(|D_{H'}|+2c)\geq n_{H'}+4c=n_H.$
By symmetry, we can assume that $|D_H\cap V(G)|\leq |D_H\cap V(G')|$. 
Now, we add all support vertices to $D_H\cap V(G)$ and further 
vertices to obtain $D_G$ by the following rules:

\begin{itemize}
\item 
If a support vertex already belongs to $D_H$, then it could have been 
dominated via the edge that we introduced.
As this interconnects to another support vertex, both already belonged to 
$D_H$. We arbitrarily select two neighbors (in $G$) of these support 
vertices and put them into $D_G$. Hence, the mentioned support vertices and the attached leaves are  totally dominated. 

\item If a support vertex $x$ did not already belong to $D_H$, two cases 
arise: (a) If it was dominated (in $H$) via an edge already belonging to $G$, 
then we do nothing on top of what we said. (b) If the support vertex $x$ 
was dominated (in $H$) by an edge $xy$ not belonging to $G$, then we must add 
another neighbor $z$ (in $G$) of $x$ to $D_G$. However, as (obviously) the 
vertex $y$ belonged to $D_H$ and was dominated by a neighbor (in $G$) in 
$D_H$, we add (in total) two vertices $x,z$ for the two support vertices 
$x,y$. Seen from the other side, this covers the case of a support vertex 
that already belonged to $D_H$ but was not dominated via the edge that we 
introduced.
\end{itemize}

Altogether, we see that we delete all leaves and  introduce at 
most that many vertices into $D_G$ (in comparison to $D_H\cap V(G)$).
By Theorem~\ref{TDS2UHS} and since all reduction rules take polynomial time, we obtain:

\begin{theorem}
\textsc{Harmless Set}  is factor-$2$  polynomial-time approximable.\LV{\qed}
\end{theorem}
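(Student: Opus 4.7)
The plan is to chain together the reduction rules, the auxiliary graph construction, and the Lam--Wei bound via Theorem~\ref{TDS2UHS}. First I would exhaustively apply the Isolate, Leaf, Floating Chain, Long Chain, Cycle Chain, and Pendant Chain reductions to the input graph in polynomial time. Since each rule is $\alpha$-preserving for every $\alpha\geq 1$, Lemma~\ref{lem-compose} guarantees their composition is as well, so by Theorem~\ref{thm-alphaapprox-general} it suffices to produce a factor-$2$ approximation on the reduced graph $G$. After reduction, $G$ enjoys two structural properties: (a) no vertex has two leaf neighbors, and (b) $G$ contains no chain of three consecutive degree-$2$ vertices.

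Next, I would build the auxiliary graph $H$ by taking an isomorphic copy $G'$ of $G$ via $f:V(G)\to V(G')$, adding an edge $v\,f(v)$ for every leaf neighbor $v$ of $G$, and then deleting all leaves of $G$ and $G'$. By construction $H$ has minimum degree at least two, and the Pendant Chain Reduction ensures that the new cross-edges create no induced cycle with three consecutive degree-$2$ vertices. A few applications of Long Chain Reduction to $H$ (each removing four vertices and accounting for two TDS vertices) produce a graph $H'$ meeting the hypotheses of Theorem~\ref{LamWei}. The algorithmic version then yields a TDS $D_{H'}$ of $H'$ with $|D_{H'}|\leq n_{H'}/2$ in polynomial time, and undoing the Long Chain Reductions gives a TDS $D_H$ of $H$ with $|D_H|\leq n_H/2$, i.e.\ the constant in Theorem~\ref{TDS2UHS} is $c=1/2$.

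The delicate step is converting $D_H$ into a TDS $D_G$ of $G$ with a controlled increase in size. By symmetry, assume $|D_H\cap V(G)|\leq|D_H\cap V(G')|$, and start with $D_H\cap V(G)$ plus all support vertices of $G$ (the $d$ leaf neighbors, which belong to every optimum TDS, so in particular $d\leq\gamma_t(G)$). A short case analysis on each support vertex $x$ shows that whenever $x$ was dominated in $H$ only via a cross-edge $x\,f(x)$, we must add one extra neighbor of $x$ from $G$ to $D_G$; but then the paired support vertex $f(x)\in D_H\cap V(G')$ was also dominated only across, so these extra vertices can be charged two-for-two against the deleted leaves on each side. This yields $a:=|D_G|-|D_H|\leq d$ as required by Theorem~\ref{TDS2UHS}.

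Applying Theorem~\ref{TDS2UHS} with $c=1/2$ and $d\leq\gamma_t(G)$ then gives that $V(G)\setminus D_G$ is a harmless set of $G$ approximating the optimum within factor $(1-1/2)^{-1}=2$. Reversing the initial reductions via Theorem~\ref{thm-alphaapprox-general} delivers the same factor on the original instance, and every step runs in polynomial time, establishing factor-$2$ approximability. The main obstacle, as anticipated, is the bookkeeping in the third paragraph: one has to verify that the cross-edge domination case genuinely pairs up so that the extra vertices inserted to fix domination in $G$ never exceed the number of removed leaves.
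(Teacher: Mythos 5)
Your proposal follows the paper's own argument essentially step for step: the same battery of $\alpha$-preserving reductions, the same doubled graph $H$ with cross-edges at leaf neighbors, the same use of the constructive Lam--Wei bound with $c=1/2$, the same support-vertex charging argument to get $a\le d\le\gamma_t(G)$, and the same invocation of Theorem~\ref{TDS2UHS} followed by undoing the reductions. The proposal is correct and matches the paper's proof (and in fact states the inequalities $2|D_{H'}|\le n_{H'}$ and $2|D_H|\le n_H$ in the intended direction).
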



\section{The differential of a graph}
\label{sec:Differential}

Let us start with an alternative presentation of this notion. Let $G=(V,E)$ be a graph. For $D_0\subseteq V$, let 
\LV{$}$ \partial (D_0):=
\left|\left(\bigcup_{x\in D_0}N(x)\right)\setminus D_0\right|-|D_0|.$\LV{$}
$\partial(D_0)$ is called the \emph{differential} of the set $D_0$, and our aim is to find a vertex set that maximizes this quantity.
This maximum quantity is known as the differential of $G$, written $\partial(G)$.
The following combinatorial results are known:

\begin{theorem} \label{Thm-differential-combinatorics} \cite{BerFer2012}
Let $G$ be a connected graph of order $n$.
\LV{\begin{itemize}
\item}\SV{(a)} If $n\geq 3$, then $\partial(G)\geq n/5$.
\LV{\item}\SV{(b)} If  $G$  has minimum degree at least two,
then
$\partial(G)\geq \frac{3n}{11}$,
apart from five exceptional  graphs, none of them having more than seven vertices.
\LV{\end{itemize}}
\end{theorem}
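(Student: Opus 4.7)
The plan is to translate both statements into upper bounds on the Roman domination number via the identity $\gamma_R(G) + \partial(G) = n$, which holds for every graph $G$, and then prove those bounds by induction on $n$. To establish the identity, given an optimum differential set $D_0$ I would set $D_2 := D_0$ and $D_1 := V \setminus (D_0 \cup N(D_0))$; this yields a Roman dominating function of value $2|D_2| + |D_1| = 2|D_0| + n - |D_0| - |N(D_0)\setminus D_0| = n - \partial(G)$, so $\gamma_R(G) \leq n - \partial(G)$. The converse inequality follows by taking $D_0 := D_2$ from an optimum Roman dominating function and checking that $\partial(D_0) \geq n - \gamma_R(G)$. Under this identity, part (a) is equivalent to $\gamma_R(G) \leq \tfrac{4n}{5}$ and part (b) to $\gamma_R(G) \leq \tfrac{8n}{11}$ (outside the exceptions).

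For part (a), I would induct on $n$, handling $n \leq 5$ by inspection. For the inductive step, I would look for a reducible local configuration: a vertex $v$ of degree at least $4$, for which putting $v$ into $D_2$ removes $|N[v]|$ vertices at a Roman-weight cost of $2$, easily fitting into the $\tfrac{4n}{5}$ budget; a pendant vertex or short pendant path, which admits a trivial local optimum; or a long chain of degree-two vertices, which can be handled by a folding-type reduction analogous to the Long Chain Reduction used earlier. When no such configuration exists, the graph is very restricted (bounded degree and restricted girth) and the inequality can be verified directly.

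For part (b), the minimum-degree-two hypothesis removes pendant vertices from the picture but forces a finer case analysis, in the spirit of the Lam-Wei argument sketched for Theorem~\ref{LamWei}. The inductive step would locate either short cycles, $K_2$-components in $G[S_2(G)]$, or specific neighborhoods of high-degree vertices, each providing a reduction that removes $k$ vertices at Roman-weight cost at most $\tfrac{8k}{11}$. Graphs admitting no such reduction are necessarily of small order; a finite enumeration then pinpoints exactly the five exceptional graphs (all on at most seven vertices) in which the bound $\tfrac{3n}{11}$ barely fails.

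The main obstacle I expect is the discharging-style bookkeeping in part (b): each reduction must produce a strictly smaller graph to which induction applies while not spending more than $\tfrac{8}{11}$ of a unit of Roman weight per removed vertex, and the reductions must together cover every non-exceptional configuration. Identifying configurations whose local trade-offs match these tight budgets, and certifying that the only irreducible counterexamples are precisely the advertised five small graphs, is the combinatorial heart of the argument and the step most likely to require a somewhat lengthy, computer-assisted or exhaustive verification.
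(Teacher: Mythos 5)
Your opening move is correct and worth highlighting: the identity $\gamma_R(G)+\partial(G)=n$ holds, your two-sided argument for it (turning an optimum differential set $D_0$ into the Roman function $(D_1,D_2)=(V\setminus N[D_0],D_0)$ of weight $n-\partial(D_0)$, and conversely taking $D_0:=D_2$) is sound, and it is in fact the content of the cited reference \cite{BerFerSig2014}. This makes your route genuinely different from the one behind the theorem as the paper uses it: \cite{BerFer2012} proves the two lower bounds on $\partial(G)$ directly via \emph{big star packings} --- a greedily chosen maximal packing of stars with at least two rays, followed by a sequence of local improvement steps and a counting argument --- which is exactly the structure the paper later exploits to extract a polynomial-time algorithm (Observations 12--14 and Theorem~\ref{Differential} depend on that constructive, local-improvement formulation). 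Your reduction instead lands on the known Roman-domination bounds $\gamma_R(G)\le \frac{4n}{5}$ and $\gamma_R(G)\le \frac{8n}{11}$ (the Chambers--Kinnersley--Prince--West theorems), which is historically how these differential bounds were first reachable. What your route buys is a cleaner statement-level equivalence; what it loses, for the purposes of this paper, is that you say nothing about how to make the inductive argument constructive, which is the property the paper actually needs from this theorem.

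The genuine gap is that beyond the identity you have only a plan, and the plan glosses over the step most likely to fail: deleting $N[v]$ (or a reducible configuration) can disconnect the graph and produce components of order $1$ or $2$, for which both bounds are false ($\gamma_R(K_1)=1>\frac{4}{5}$ and $\gamma_R(K_2)=2>\frac{8}{5}$), so the induction hypothesis cannot be invoked componentwise without a mechanism that either prevents small components or pays for them out of the local budget. Likewise, ``when no such configuration exists the graph is very restricted and the inequality can be verified directly'' and ``a finite enumeration then pinpoints exactly the five exceptional graphs'' are exactly the places where the entire combinatorial content of the theorem lives; asserting them does not discharge them, and the tightness of the $\frac{8}{11}$ budget (cycles such as $C_4$, $C_5$ and other small minimum-degree-two graphs sit at or below the bound) means the case analysis cannot be waved through. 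As a proof of the cited theorem the proposal is therefore incomplete, though the strategy itself is viable and known to succeed.
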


It is not hard to turn the first combinatorial result into a kernelization result, yielding a 
kernel 
bound of $5k$, where $k$ is the natural parameterization of the \textsc{Differential}.\LV{ Along the lines of \cite{BazCho2012}, we can obtain a factor-5 approximation by first computing a spanning tree $T=(V,E_T)$ for $G$ and then computing an optimum differential set $D_T$ in $T$ by dynamic programming,
and then observing that $D_T$ is a factor-5 approximation for $G$.}
In \cite{BerFer2013sub}, this result was improved to a kernel whose order
is bounded by $4k$. \LV{Along those lines,}\SV{This way,} we can also get a factor-4 approximation.
However,\LV{ the second item of} Theorem~\ref{Thm-differential-combinatorics} suggests a possible improvement to a factor of $\frac{11}{3}$ \LV{if we employ}\SV{by} our framework.\LV{ This is what we are going to endeavor in this section.}

First, we have to show\SV{ (see the long version of this paper for more details)} that the reduction rules presented in~\cite{BerFer2013sub} as kernelization rules can be also interpreted as $\alpha$-preserving rules.
We use some non-standard terminology\LV{ for stating the rules}. 
A \emph{hair} is a
sequence of two vertices $uv$, where $u$ is a leaf and $v$ has
degree two. Then, $u$ is also called a \emph{hair leaf}. 

\LV{
We use the
following simple notation for a hair $uv$ for reasons of clarity:
$u-v-\cdots$.

\begin{enumerate}
 \item {\bf Leaf Reduction.} If there are two leaves connected to the same vertex, then connect these leaves.
 \item {\bf Hair Reduction.} If there are two hairs connected to the same vertex, then remove the two hair leaves.
\item {\bf Leaf-Hear Reduction.} If there is a leaf and a hair connected to the same vertex, then remove the hair leaf.
\item {\bf Long Hair Reduction.} If there is a hair $u-v-\cdots$ connected to a vertex $w$ of degree two, then
remove $u,v,w$. 
\item {\bf Neighbor Hair Reduction.} If there is a hair $u-v-\cdots$ connected to a vertex $w$ and another hair
$u'-v'-\cdots$ connected to a neighbor $w'$ of $w$, then remove the
edge $ww'$. 
\end{enumerate}

\begin{figure}[tbh]
\begin{center}
 \includegraphics[width=.9\textwidth]{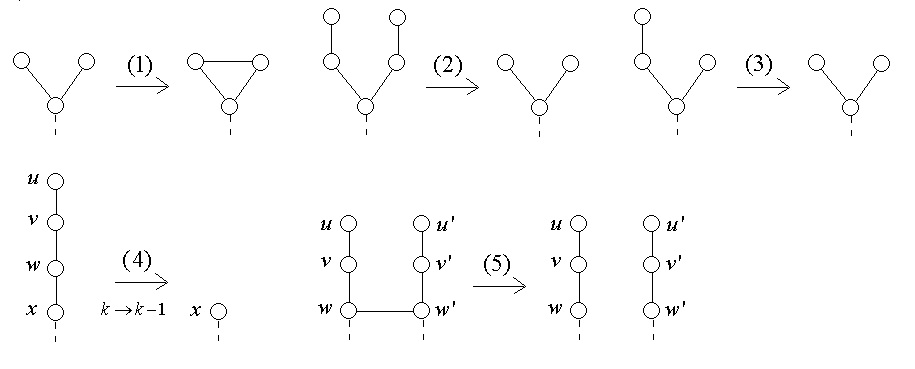}
 \end{center}
\caption{Reduction rules for \textsc{Differential}.\label{fig-Kernel4_14}}
\end{figure}

In the reasoning given for the rules in~\cite{BerFer2013sub}, only for the Long Hair Reduction, the natural parameter changes (decreases by one). The argument shows (for the other cases) that even if a set of vertices is produced for the reduced graph that is not a
valid solution for the original graph, still
another solution can be constructed that is not
worse (smaller) than the one that was obtained,
so that approximation factors are clearly preserved. The Long Hair Reduction can be seen to be $\alpha$-preserving when setting $a=b=1$.
Hence, we can summarize:

\begin{observation}
The previous five reductions are $\alpha$-preserving for any $\alpha\geq 1$.
\end{observation}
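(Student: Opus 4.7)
The plan is to verify the two inequalities of the definition of an $\alpha$-preserving reduction for each of the five rules separately, noting that since we want the claim to hold for every $\alpha\ge 1$, it suffices to exhibit constants $a,b$ with $a\le b$ (in fact $a=b$ in all cases). For four of the five rules we aim for $a=b=0$, which means that the rule preserves both $\partial^*(G)$ and the value of any concrete differential solution; for the Long Hair Reduction we aim for $a=b=1$, reflecting the fact that the removed hair contributes exactly one unit to the differential.

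For the \textbf{Leaf Reduction} and the \textbf{Neighbor Hair Reduction}, only edges are added or removed, and no vertex is deleted, so the mappings $\mathop{\tt sol}$ can simply be the identity on vertex sets. I would check in each case that any differential set $D_0'$ for the reduced instance remains a differential set of the same size in the original graph, and vice versa, after possibly swapping a vertex to recover feasibility; the two short arguments in~\cite{BerFer2013sub} already provide the combinatorial content, I only need to re-package them with $a=b=0$. For the \textbf{Hair Reduction} and the \textbf{Leaf-Hair Reduction} the reduced graph is $G$ minus one or two hair leaves; I would define $\mathop{\tt sol}(D_0'):=D_0'$ (the removed leaves are put into the dominated set rather than into $D_0$), and conversely, given a differential set $D_0$ of $G$ containing a removed hair leaf $u$, swap $u$ for its neighbour $v$ to obtain a differential set of $G$ of at least the same value that avoids $u$ and then restricts to $D_0'$ on the reduced instance. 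The main obstacle here is a careful case analysis: a candidate hair leaf might sit in $D_0$, or be counted as dominated by something that is itself being modified, so I need to describe explicitly which vertex substitutions restore feasibility while not decreasing $\partial$.

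The one rule that genuinely shrinks the optimum is the \textbf{Long Hair Reduction}, which removes the triple $u,v,w$, where $u$ is a hair leaf, $v$ has degree two, and $w$ has degree two. I would show that an optimum differential of $G$ can be assumed to contain $v$ (with $u,w$ counted as dominated by $v$), contributing exactly $1=|N(v)\setminus\{v\}|-|\{v\}|$ to $\partial$; dropping $v$ from the set yields a differential set of $G':=G-\{u,v,w\}$ whose value is exactly one less, so $\partial(G')+1\ge\partial(G)$, giving $a=1$. Conversely, from any differential set $D_0'$ for $G'$ I take $\mathop{\tt sol}(D_0'):=D_0'\cup\{v\}$; the two new dominated vertices $u$ and $w$ against the single new set vertex $v$ add exactly $1$ to the differential, which gives $b=1$.

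Finally, once each individual rule is verified, the overall observation follows. I do not expect any serious obstacle beyond the bookkeeping of the case analysis for hairs that are already (partially) contained in a candidate differential set; the combinatorial verifications are all local to the reduced subgraph and follow the arguments already sketched in~\cite{BerFer2013sub}, merely re-read through the lens of Theorem~\ref{thm-alphaapprox-general} rather than that of kernelization.
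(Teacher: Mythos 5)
Your proposal matches the paper's argument: the paper likewise observes that for four of the rules the optimum value and the lifted solution value are unchanged (so $a=b=0$), defers the local combinatorial verification to the kernelization arguments of~\cite{BerFer2013sub}, and singles out the Long Hair Reduction as the only rule where the value drops, handling it with $a=b=1$ exactly as you do by reinserting the degree-two hair vertex $v$. Your write-up is in fact somewhat more explicit than the paper's, but the route is the same.
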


\begin{lemma} \cite{BerFer2013sub} \label{lem-reduced-properties}
 Let $G=(V,E)$ be a graph where none of the previous five reductions  applies.
Then, $G$ has the following properties:
\begin{enumerate}
 \item[\emph{(1)}] To each vertex, at most one leaf  or one hair is attached, but not both together.
 \item[\emph{(2)}] If we remove all leaves and all hairs from $G$, then the remaining graph
 $\tilde G=(\bar V, \bar E)$, henceforth called \emph{nucleus},
has minimum degree of at least two.
\item[\emph{(3)}] If a hair is attached to a vertex $u$ in the nucleus, then no hair is attached to any neighbor of $u$
within the nucleus.
\end{enumerate}

\end{lemma}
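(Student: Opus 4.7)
The plan is to verify each of the three properties in turn as a direct consequence of the non-applicability of the corresponding reduction rule(s). Since the nucleus $\tilde G$ is obtained from $G$ by deleting exactly the leaves (degree-one vertices) and the degree-two middles of hairs, I would use throughout the observation that the only vertices of $G$ not in $\bar V$ are leaves and hair-middles.

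For property (1), the argument is immediate by contraposition on the three ``two-dangler'' rules: if some vertex $v$ had two leaf neighbors, the Leaf Reduction would still apply; if $v$ had two hair neighbors, the Hair Reduction would still apply; and if $v$ had both a leaf neighbor and a hair neighbor, the Leaf-Hair Reduction would still apply. Since by assumption none of these is applicable, (1) follows.

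For property (2), let $v \in \bar V$. Since leaves are removed from the nucleus, $\deg_G(v) \geq 2$. If $\deg_G(v) \geq 3$, property (1) guarantees that at most one of $v$'s neighbors is a leaf or a hair-middle, so at least two of its neighbors remain in $\bar V$ and $\deg_{\tilde G}(v) \geq 2$. The delicate case is $\deg_G(v) = 2$, with neighbors $x$ and $y$. Here neither $x$ nor $y$ can be a leaf: if, say, $x$ were a leaf, then $v$ would be the degree-two middle of the hair $xv$, and hence would not belong to $\bar V$, a contradiction. Nor can $x$ or $y$ be a hair-middle: if, say, $x$ were the middle of a hair $\ell{-}x{-}\cdots$ attached (via $v$) to the degree-two vertex $v$, then the Long Hair Reduction would fire and remove $\ell, x, v$, again contradicting $v \in \bar V$. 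Therefore $x, y \in \bar V$, so $\deg_{\tilde G}(v) = 2$.

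Property (3) follows directly from the Neighbor Hair Reduction: if two adjacent nucleus vertices $u, u' \in \bar V$ both had hairs attached, that rule would apply and delete the edge $uu'$, contradicting its presence in the reduced graph. The main obstacle is property (2), where one must carefully interleave ``$v$ is in the nucleus'' with ``$v$ has small degree in $G$''; the Long Hair Reduction is precisely what closes the loophole in which a degree-two nucleus vertex could have both of its neighbors absorbed by hairs.
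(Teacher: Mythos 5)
Your proof is correct. The paper itself offers no proof of this lemma---it is imported verbatim from \cite{BerFer2013sub}---so there is nothing to compare against; your direct verification (contraposition on the Leaf, Hair and Leaf--Hair Reductions for (1), the degree case analysis for (2) with the Long Hair Reduction closing the loophole of a degree-two nucleus vertex losing both neighbours to hairs, and the Neighbor Hair Reduction for (3)) is exactly the argument one would expect. One degenerate caveat: your step ``$v\in\bar V$ implies $\deg_G(v)\ge 2$'' tacitly excludes isolated vertices, which none of the five rules removes; this is harmless here because the surrounding results on the differential assume connected graphs of order at least three.
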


Notice that the properties listed in Lemma~\ref{lem-reduced-properties} ensure that when obtaining the nucleus $H$ from the reduced graph $G$ by deleting $d$ vertices, $d\leq \gamma_R(G)$ is verified.
In order  to verify that a sufficiently big solution for the nucleus can be found in polynomial time, observe the proof strategy of \cite{BerFer2012}: There, the differential of a graph is modeled by so-called big star packings.
It is possible to start with a greedily obtained big star packing and then further modify the solution, using the local (and hence easy-to-check) criteria exhibited in various lemmas of that paper, up to the point when no further improvements are possible.
The big star packing obtained in this way corresponds to a differential set $D$ with $\partial(D)\geq \frac{3}{11}n$, where $n$ is the order of the graph.

As the proof in \cite{BerFer2012} uses extremal combinatorial arguments, it is (at least at first glance) non-constructive.
Let us give some more details of the algorithm that is hidden within these  combinatorial arguments in the following.

\paragraph{The greedy selection of a big star packing.}

We will (from now on) work on a mixed graph (i.e., a graph that has both directed and undirected edges; directed edges are also called arcs) such that each vertex has at most one outgoing  arc, and no vertex with an incoming arc has an outgoing arc.
We will call a vertex incident to some directed arc \emph{marked}.
As we start with an undirected graph $G=(V,E)$, at the beginning all vertices are unmarked.
We proceed as follows:

\smallskip\noindent
As long as possible:
\begin{itemize}
\item Pick some unmarked vertex $x$ with at least two unmarked neighbors.
\item Direct all edges connecting $x$ to any unmarked neighbor towards $x$.
\end{itemize}

Now, consider the set $D$ of vertices to which some arcs point to, and let $B(D)$ denote the remaining marked vertices. Clearly, $\left(\bigcup_{x\in D}N(x)\right)\setminus D=B(D)$. Hence, $\partial(D)=|B(D)|-|D|$.
Moreover, due to the directions of the edges,
we can view each $x\in D$ as the center of a star to which at least two arcs (rays) are pointing.
So, we have defined a collection $\mathcal{S}(D)$ of stars that can be viewed as a star packing. As each star has at least two rays, we called them \emph{big stars}. Due to our greedy approach,
we hence arrive at $\mathcal{S}(D)$ as being a maximal big star packing.
Moreover, $|B(D)|$ is also the number of
directed edges (or rays) in total.
Let $C(D):=V\setminus (B(D)\cup D)$.

By definition of the partition $(D,B(D),C(D))$ of $V$ we find:

\begin{observation}\label{obs-dif0}
No edge connects vertices from $D$ with vertices from $C(D)$.
\end{observation}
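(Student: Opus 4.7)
The plan is to proceed by contradiction, exploiting the invariants maintained by the greedy procedure. Suppose, for the sake of contradiction, that there exists an edge $uv$ in $G$ with $u \in D$ and $v \in C(D)$. Since $u \in D$, there is a well-defined step in the greedy loop at which $u$ was selected as an unmarked vertex with at least two unmarked neighbors, at which point edges from all unmarked neighbors of $u$ were directed toward $u$. I would focus the entire argument on what $v$ looks like at that particular step.

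First I would handle the case in which $v$ is unmarked at the moment $u$ is picked. Then $v$ is an unmarked neighbor of $u$, so by the action of the rule the edge $uv$ gets oriented from $v$ to $u$. Consequently $v$ acquires an outgoing arc and is thereafter marked; since marking is monotone (the algorithm never unmarks a vertex) and $v$ has no incoming arc (by the invariant that no vertex with an incoming arc has an outgoing arc), $v$ lands in $B(D)$, contradicting $v \in C(D)$.

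Next I would handle the case in which $v$ is already marked when $u$ is chosen. Since markedness persists, $v$ is still marked at termination. By the definition of the partition $(D, B(D), C(D))$, every marked vertex belongs to $D \cup B(D)$, so again $v \notin C(D)$, a contradiction.

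The two cases exhaust the possibilities, yielding the claim. I do not anticipate a genuine obstacle here: the entire argument rests on the monotonicity of the marking procedure and the bookkeeping that every marked vertex falls into $D \cup B(D)$ while every unmarked vertex falls into $C(D)$. The only subtlety worth double-checking is that the algorithm indeed orients every $u$-to-unmarked-neighbor edge when $u$ is picked, so that an unmarked $v$ adjacent to the selected $u$ cannot ``slip through'' and remain unmarked after that step.
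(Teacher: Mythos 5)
Your proof is correct and matches the paper's (essentially unstated) justification: the paper simply asserts the observation follows from the construction of the partition $(D,B(D),C(D))$, and your two-case analysis — an unmarked neighbor of a chosen center gets marked by the orientation step, a marked one stays marked — is exactly the spelled-out version of that argument.
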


As we obtain a maximal big star packing, we conclude:

\begin{observation}\label{obs-dif1}
 The induced graph $G[C (D)]$ is undirected and decomposes into $K_1$- and $K_2$-components.
\end{observation}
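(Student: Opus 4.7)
The plan is to verify the two claims of the observation separately, each following almost directly from the construction of the greedy procedure and the definition of $C(D)$.

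First I would establish that $G[C(D)]$ is undirected. By definition, a vertex becomes marked precisely when it becomes incident to some directed arc (either as an outgoing arc, when it is selected as a star center, or as an incoming arc, when it is the unmarked neighbor of such a selected vertex). Vertices in $C(D) = V \setminus (B(D) \cup D)$ are exactly those that remained unmarked at termination. Hence, no vertex in $C(D)$ is incident to any arc, and in particular every edge of $G$ with both endpoints in $C(D)$ is still undirected. This part is essentially a definitional unfolding and should require only a sentence or two.

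Next, I would establish the decomposition into $K_1$- and $K_2$-components by showing that the maximum degree of $G[C(D)]$ is at most one. The greedy loop terminates only when the selection condition fails, i.e., when every unmarked vertex has at most one unmarked neighbor. Since $C(D)$ is precisely the set of unmarked vertices at termination, every $x \in C(D)$ has at most one neighbor in $C(D)$, so $G[C(D)]$ has maximum degree at most one. A standard observation then shows that any graph with maximum degree at most one is a disjoint union of $K_1$- and $K_2$-components, which is exactly the desired structural statement.

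Since both parts follow so directly from the description of the greedy algorithm, I do not expect any real obstacle; the only thing to be careful about is to invoke the termination condition correctly, namely that the loop halts only when no unmarked vertex has \emph{two} or more unmarked neighbors (rather than, say, any unmarked neighbor). This is what forces the bound of one on the degree in $G[C(D)]$, and hence yields the $K_1/K_2$ decomposition rather than, e.g., isolated vertices only.
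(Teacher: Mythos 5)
Your proof is correct and matches the paper's (implicit) argument: the paper justifies this observation solely by the remark that the greedy procedure yields a \emph{maximal} big star packing, which is exactly the termination condition you invoke to bound the degree in $G[C(D)]$ by one. Your write-up merely makes explicit the two definitional steps (unmarked vertices carry no arcs; maximality forbids an unmarked vertex with two unmarked neighbors) that the paper leaves to the reader.
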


\paragraph{First local improvement.}
We are now going to improve the solution found so far.

\smallskip\noindent
As long as possible:
\begin{itemize}
\item Pick some vertex $x$ from $B(D)$ that has two or more neighbors in $C(D)$.
\item Let $y\in D$ be such that the edge $xy$ is directed towards $y$.
\item Replace the arc from $x$ to $y$ by an undirected edge again.
\item If there is now (only) one arc  $zy$ directed to $y$, remove $y$ from $D$ and render $zy$ an undirected edge again.
(This will increase the number of unmarked vertices.)
\item Direct all edges that connect $x$ to some umarked vertex towards $x$ and put $x$ into $D$.
 \end{itemize}

If no further improvements are possible, one might want to ensure that the (new) big star packing $\mathcal{S}(D)$ is still
maximal. If not, obvious further improvements are possible.
However, after a finite number of steps, this will end.

The new set $D$ (and the related star packing
 $\mathcal{S}(D)$ that can be read off from the directed edges) satisfies Observation~\ref{obs-dif1}
and:

\begin{observation}\label{obs-dif2}
 Every $x\in B(D)$ has at most one neighbor in $C(D)$.
\end{observation}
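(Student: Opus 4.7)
The claim is essentially the termination condition of the \emph{First local improvement} loop: that loop runs as long as some $x \in B(D)$ has at least two neighbors in $C(D)$, so once it halts, no such $x$ remains, which is precisely the observation. My plan is therefore to turn this loop invariant into a genuine proof by verifying that the improvement loop does halt after polynomially many iterations.

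The natural potential is the count of marked vertices $M := |D| + |B(D)| = |V| - |C(D)|$, which is bounded above by $|V|$. I would show that $M$ strictly increases at every iteration by a short case split on whether the former center $y$ retains at least two incoming arcs. In the ``$y$ retained'' case, $x$ leaves $B(D)$ to enter $D$ while its $k \geq 2$ neighbors in $C(D)$ all join $B(D)$, so $\Delta M = k \geq 2$. In the ``$y$ evicted'' case, $y$ and $z$ temporarily return to $C(D)$, but $y$ immediately re-enters $B(D)$ via the newly directed arc $yx$ (and $z$ does so as well if $z$ is a neighbor of $x$), while the $k$ original unmarked neighbors of $x$ all join $B(D)$; a careful count gives $\Delta M \geq k-1 \geq 1$. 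The subsequent restoration of maximality of the big star packing only adds centers and rays, which can only raise $M$ further, so the entire process halts in polynomial time.

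The main obstacle is the bookkeeping in the ``$y$ evicted'' case, where undirecting $zy$ briefly exposes $y$ and $z$ as possibly adjacent to $x$; one has to count without double-counting to verify that $M$ genuinely grows, rather than, say, staying flat because $y$ or $z$ cycles through $C(D)$ only to return to $B(D)$ at no net gain. Once that accounting is nailed down, Observation~\ref{obs-dif2} falls out immediately from the loop's exit condition, with no further work needed.
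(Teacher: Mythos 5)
Your proof is correct and follows exactly the route the paper intends: the observation is just the negation of the improvement loop's guard, and the only real content is termination, which the paper merely asserts (``after a finite number of steps, this will end'') while you make it rigorous via the potential $M=|V|-|C(D)|$ with the correct case analysis $\Delta M\geq k-1\geq 1$. Nothing is missing; you have in fact supplied slightly more detail than the paper itself does.
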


\paragraph{Second local improvement.}
By a procedure similar to the previous case,
we can create new stars if some $x\in B(D)$ is
part of a star with at least three rays and
neighbor of some $K_2$-component in $G[C(D)]$.
Leaving out details in this case, we can observe
for the (new) differential set $D$:

\begin{observation}\label{obs-dif3}
 If $x\in B(D)$ is neighbor of some $K_2$-component in $C(D)$, then it belongs to
 some star with at least three rays.
\end{observation}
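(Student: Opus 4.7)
\textbf{Proof plan for Observation~\ref{obs-dif3}.}

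My plan is to establish Observation~\ref{obs-dif3} by an exchange argument in the spirit of the first local improvement: assume a violation exists, exhibit a local surgery on $D$ that does not decrease $\partial(D)$ and removes the violation, then iterate until no violation remains.

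Assume for contradiction that the procedure has terminated but Observation~\ref{obs-dif3} fails at some $x\in B(D)$: the vertex $x$ is adjacent to a $K_2$-component $\{u,v\}$ of $G[C(D)]$, and $x$ is a ray of a star centered at some $y\in D$ that has only the two rays $x$ and $z$. By Observation~\ref{obs-dif2}, $x$ has precisely one neighbor in $C(D)$; without loss of generality it is $u$. I now perform the following surgery. Replace the arcs $x\to y$ and $z\to y$ by undirected edges and remove $y$ from $D$; insert $u$ into the updated set $\tilde D=(D\setminus\{y\})\cup\{u\}$; orient the edges $vu$ and $xu$ as $v\to u$ and $x\to u$ (and orient every remaining edge from $u$ to an unmarked neighbor toward $u$ as well).

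A direct accounting shows that the removed star at $y$ contributed $2-1=1$ to $\partial(D)$ while the new star at $u$ contributes at least $2-1=1$. The ejected vertices $y$ and $z$ join either $B(\tilde D)$ or $C(\tilde D)$, so the star-center/ray balance underlying $\partial$ is not decreased, yielding $\partial(\tilde D)\ge \partial(D)$. Moreover, the specific violation at $x$ is removed, since $u\in \tilde D$ now. The surgery, however, may violate Observations~\ref{obs-dif1} and~\ref{obs-dif2}: for instance, $y$ and $z$ might be adjacent to other unmarked vertices in $G$, or vertices in $B(\tilde D)$ might acquire two neighbors in $C(\tilde D)$. I would therefore re-run the greedy maximal big star packing step and the first local improvement; neither step decreases $\partial$, and Observations~\ref{obs-dif1} and~\ref{obs-dif2} are restored without re-creating the original violation (because $u$ has now been moved into $\tilde D$).

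The main obstacle will be termination, since $\partial(D)$ is only non-decreasing across surgeries. I would handle this with a lexicographic monovariant: the primary key is $\partial(D)$, and the secondary key counts pairs $(x,\{u,v\})$ with $x\in B(D)$ adjacent to a $K_2$-component $\{u,v\}$ of $G[C(D)]$ and $x$ lying in a star of exactly two rays. Each surgery either strictly increases $\partial$ or, while preserving $\partial$, strictly decreases the secondary count. Since $\partial(D)\le n$ and the secondary count is a nonnegative integer bounded polynomially in $n$, the process halts in polynomially many iterations. Upon termination, Observations~\ref{obs-dif1}, \ref{obs-dif2}, and~\ref{obs-dif3} all hold simultaneously for the resulting differential set $D$.
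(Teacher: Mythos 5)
Your surgery has a genuine gap at the step ``the specific violation at $x$ is removed \dots\ without re-creating the original violation (because $u$ has now been moved into $\tilde D$)''. When you dissolve the two-ray star at $y$, both $y$ and $z$ become unmarked and fall into $C(\tilde D)$; since $xy\in E$ and $yz\in E$, the pair $\{y,z\}$ may itself be a $K_2$-component of $G[C(\tilde D)]$ adjacent to $x$, while $x$ now sits in the two-ray star centered at $u$. So the violating pair $(x,\{u,v\})$ is merely replaced by the violating pair $(x,\{y,z\})$: your secondary count stays constant, the lexicographic monovariant does not drop, and the procedure can oscillate forever. A concrete witness is the path $z-y-x-u-v$: starting from $D=\{y\}$ (a legitimate output of the greedy step) you have the violation $(x,\{u,v\})$; your surgery yields $\tilde D=\{u\}$ with the violation $(x,\{y,z\})$; neither the greedy step nor the first local improvement changes anything (no unmarked vertex has two unmarked neighbors, and no vertex of $B(\tilde D)$ has two neighbors in $C(\tilde D)$); and the next surgery returns to $D=\{y\}$. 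The configuration that actually satisfies the observation here is $D=\{x\}$, which your local exchange never reaches.

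The underlying difficulty is that a two-ray star cannot give up a ray without being destroyed, and destroying it ejects two vertices back into $C(D)$. The paper's route sidesteps exactly this: its second local improvement only detaches $x$ from a star with \emph{at least three} rays (so the old star survives as a big star, nothing returns to $C(D)$, and $|C(D)|$ strictly decreases by two, which yields termination), and the genuinely hard case of $K_{1,2}$-stars is deferred to the grouped, ``relatively messy'' arguments of Lemmas 3.10 through 3.17 of \cite{BerFer2012}, which treat several small stars simultaneously rather than one at a time. Your single local exchange does not replace that part of the argument, so the claim that Observations~\ref{obs-dif1}, \ref{obs-dif2} and~\ref{obs-dif3} can all be reached by iterating it is not established.
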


Some simple computations (as undertaken in \cite{BerFer2012}) show that the set $D$ satisfies our desired bound, i.e., $|D|\geq \frac{3}{11}|V|$, if the packing $\mathcal{S}(D)$
only contains stars with at least four rays.
Anyways, it could well be that the (valid) differential set $D$ satisfies the bound and we can stop here.

\paragraph{Further local improvements on smaller stars $K_{1,2}$.}
If not, then we have to make further local improvements on these smaller stars, considering them in groups. The (relatively messy) details can be found in Lemmas 3.10 through 3.17 in~\cite{BerFer2012}, but this should make clear that finally we can obtain a sufficiently big differential in polynomial time.

Having obtained such differential set for the nucleus of a graph, this solution can be easily lifted to a solution of the reduced graph;
Theorem~\ref{RDS2Diff} allows us to conclude Theorem~\ref{Differential}.
}

\noindent
\LV{We are going to use the idea of computing a sufficiently big solution for the nucleus, based on the following variant of Theorem~\ref{TDS2UHS}.}
\SV{We compute a sufficiently big solution for the nucleus and then use:}

\begin{theorem}
\label{RDS2Diff}
Let $G$ be a graph of order $n_G$ and let $H$ be a graph of order $n_H$ obtained 
from $G$ by deleting $d$ vertices.
 Let $D_G=D_{G,1}\cup D_{G,2}$ and $D_H=D_{H,1}\cup D_{H,2}$ be Roman DS solutions of $G$
and~$H$, \LV{respectively, such that}\SV{with} 
$D_{H,2}=D_{G,2}$ and 
$a = |D_{G,1}|-|D_{H,1}| \leq d$. 
If $|D_{H,1}|+2|D_{H,2}| \leq c\cdot n_H$ and $d\leq \gamma_R(G)$, then $\partial(V(G)\setminus D_G)=n_G-2|D_{G,2}|-|D_{G,1}|$ is within a factor of $(1-c)^{-1}$ from optimum.
\end{theorem}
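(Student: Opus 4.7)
The plan is to mirror the proof of Theorem~\ref{TDS2UHS} almost verbatim, with the role of the cardinality $|D|$ of a TDS replaced by the \emph{weight} $|D_{1}|+2|D_{2}|$ of a Roman dominating function, and the role of $\gamma_t$ replaced by $\gamma_R$. The key identity (well known in the literature on the differential of a graph; see e.g.~\cite{BerFer2012}) is the Gallai-type relation $\gamma_R(G)+\partial(G)=n_G$, which lets us convert a weight-minimization statement into the desired differential-maximization statement. The hypothesis that the quantity $n_G-|D_{G,1}|-2|D_{G,2}|$ agrees with $\partial(V(G)\setminus D_G)$ essentially encodes one direction of this Gallai identity applied to the specific pair $(D_{G,1},D_{G,2})$.

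I would first unfold the weights. Since $D_{H,2}=D_{G,2}$ and $|D_{G,1}|-|D_{H,1}|=a$, the weight of $D_G$ is $|D_{G,1}|+2|D_{G,2}| = (|D_{H,1}|+a)+2|D_{H,2}| = (|D_{H,1}|+2|D_{H,2}|)+a$. Using $n_H=n_G-d$, the assumption $|D_{H,1}|+2|D_{H,2}|\le c\cdot n_H$, and $a\le d$, one gets
\[
|D_{G,1}|+2|D_{G,2}|\le c(n_G-d)+a = cn_G+(a-cd)\le cn_G+(1-c)d.
\]
Then invoking the remaining hypothesis $d\le \gamma_R(G)$ yields $|D_{G,1}|+2|D_{G,2}|\le cn_G+(1-c)\gamma_R(G)$, which is the direct analogue of the corresponding inequality in the proof of Theorem~\ref{TDS2UHS}.

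Next I would translate back to the differential. Subtracting from $n_G$ and using $n_G-\gamma_R(G)=\partial(G)$,
\[
n_G-2|D_{G,2}|-|D_{G,1}|\;\ge\;(1-c)\bigl(n_G-\gamma_R(G)\bigr)\;=\;(1-c)\,\partial(G).
\]
By the assumption identifying $\partial(V(G)\setminus D_G)$ with $n_G-2|D_{G,2}|-|D_{G,1}|$, this is exactly $\partial(V(G)\setminus D_G)\ge (1-c)\partial(G)$, i.e.\ an $(1-c)^{-1}$-approximation of the optimum differential.

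The only genuinely delicate step is the first one, namely justifying that a bound on $|D_{H,1}|+2|D_{H,2}|$ in the subgraph $H$ really does transfer through the lifting, i.e.\ that the lift actually produces a Roman dominating function of $G$ with the stated weight increment $a$. This follows from the construction used in the reduction rules for \textsc{Differential} (cf.\ Lemma~\ref{lem-reduced-properties} and the discussion of the nucleus): the $d$ deleted vertices can always be covered by adjoining at most $d$ vertices to $D_{1}$, so the hypothesis $a\le d$ is automatically enforceable by the algorithm. With this in place, the rest of the argument is a short algebraic computation, completely parallel to the proof of Theorem~\ref{TDS2UHS}.
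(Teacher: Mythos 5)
Your proposal is correct and follows essentially the same route as the paper's own proof: the identical chain of inequalities $|D_{G,1}|+2|D_{G,2}| \leq c(n_G-d)+a \leq cn_G+(1-c)d \leq cn_G+(1-c)\gamma_R(G)$, followed by subtracting from $n_G$ and using $n_G-\gamma_R(G)=\partial(G)$. The extra paragraph on justifying the lifting is harmless but not needed for the theorem as stated, since the existence of $D_G$ and $D_H$ with the stated properties is part of the hypothesis.
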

\begin{proof}
As $n_H=n_G-d$,
$|D_{G,1}|+2|D_{G,2}| = |D_{H,1}|+a +2|D_{H,2}|\leq c(n_G-d)+a = cn_G+(a-cd)
\leq cn_G+ d-cd = cn_G + (1-c)d \leq cn_G + (1-c)\gamma_R(G)$.
Hence,  $n_G-2|D_{G,2}|-|D_{G,1}| \geq n_G - cn_G - (1-c) \gamma_t(G) = (1-c)(n_G - \gamma_R(G))=(1-c)\partial(G)$.
This immediately yields an approximation factor of $(1-c)^{-1}$.\qed
\end{proof}

We can turn the (non-constructive) combinatorial reasoning of \cite{BerFer2012} into a polynomial-time algorithm\SV{ (see~\cite{AbuBCF2014})}, which allows us to conclude\LV{ with our framework}:

\begin{theorem}\label{Differential}
\textsc{Differential} is factor-$\frac{11}{3}$  polynomial-time approximable.
\end{theorem}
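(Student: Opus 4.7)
The plan is to combine the $\alpha$-preserving reductions with a constructive version of Theorem~\ref{Thm-differential-combinatorics}(b). First, I would exhaustively apply the five reductions (Leaf, Hair, Leaf-Hair, Long Hair, Neighbor Hair) from \cite{BerFer2013sub}, each of which is $\alpha$-preserving for every $\alpha\geq 1$. By Lemma~\ref{lem-compose} the composition is also $\alpha$-preserving, so by Theorem~\ref{thm-alphaapprox-general} any factor-$\frac{11}{3}$ approximation on the reduced instance lifts to the input. After these rules, Lemma~\ref{lem-reduced-properties} guarantees that the nucleus $\tilde G$ (obtained from the reduced graph $G$ by deleting the $d$ leaves and hair leaves) has minimum degree at least two, and that $d\le\gamma_R(G)$ since every deleted leaf/hair vertex is dominated in some optimum Roman dominating set by an adjacent vertex that must belong to $D_2$.

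Second, I would establish a polynomial-time constructive analogue of Theorem~\ref{Thm-differential-combinatorics}(b): given a connected graph $H$ of order $n_H$ with minimum degree $\ge 2$ (and not one of the five exceptional small graphs), produce in polynomial time a set $D\subseteq V(H)$ with $\partial(D)\geq \frac{3n_H}{11}$. Following the extremal argument of \cite{BerFer2012}, I would model differential sets as big star packings (stars with at least two rays, where centers form $D$ and rays form $B(D)$, leaving $C(D)=V\setminus(D\cup B(D))$), and run the following phases: (i) greedily construct a maximal big star packing by orienting edges toward centers; (ii) perform local swaps ensuring that every vertex in $B(D)$ has at most one neighbor in $C(D)$ (Observation on $C$-neighbors); (iii) further swaps ensuring that any $B(D)$-vertex adjacent to a $K_2$-component of $G[C(D)]$ belongs to a star of at least three rays; (iv) finally, apply the fine-grained re-balancing operations of Lemmas 3.10--3.17 of \cite{BerFer2012} on groups of $K_{1,2}$-stars. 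Each local move strictly increases $|B(D)|-|D|$ or decreases a suitable potential, so only polynomially many moves happen, and each test/move is polynomial-time checkable. The exceptional small graphs can be handled by brute-force enumeration of constant size.

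Third, from the set $D$ computed in the nucleus $\tilde G=H$ I would read off the Roman dominating pair $(D_{H,1},D_{H,2})$ with $D_{H,2}=D$ and $D_{H,1}=B(D)$, so that $|D_{H,1}|+2|D_{H,2}|=n_H-\partial(D)\le n_H-\frac{3n_H}{11}=\frac{8}{11}n_H$. Lifting to $G$, I would place each leaf/hair support vertex into $D_{G,2}$ and each deleted leaf/hair leaf into $D_{G,1}$ (or rather leave it dominated); counting shows $a=|D_{G,1}|-|D_{H,1}|\le d$, exactly as required for Theorem~\ref{RDS2Diff} with $c=\frac{8}{11}$. That theorem then yields the approximation factor $(1-c)^{-1}=\frac{11}{3}$ for the reduced instance, and the $\alpha$-preservation of the five reductions transfers the guarantee to the original input.

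The main obstacle is step (iv) of the constructive phase: the arguments in Lemmas 3.10--3.17 of \cite{BerFer2012} distinguish many cases for how small stars ($K_{1,2}$) interact with each other and with $C(D)$, and verifying that every case description is algorithmic (i.e., that each ``such a configuration exists'' claim in the extremal proof can be detected in polynomial time and the associated re-packing can actually be carried out) requires a careful case-by-case rewrite of that proof as a local-search procedure with a monovariant that certifies polynomial termination. Once this constructive rewriting is done, the remaining accounting is routine given Theorem~\ref{RDS2Diff}. \qed
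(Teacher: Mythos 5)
Your proposal follows essentially the same route as the paper: exhaustively apply the five $\alpha$-preserving reductions of \cite{BerFer2013sub}, use Lemma~\ref{lem-reduced-properties} to get the nucleus of minimum degree two with $d\leq\gamma_R(G)$, constructivize the extremal argument of \cite{BerFer2012} via a greedy maximal big star packing followed by the local improvement phases (including the case analysis of Lemmas 3.10--3.17), and plug $c=\frac{8}{11}$ into Theorem~\ref{RDS2Diff}. One small slip: in the star-packing-to-Roman-DS translation you should set $D_{H,1}=C(D)$ (the vertices left uncovered by the star centers) rather than $D_{H,1}=B(D)$ --- $B(D)$ is the differential set itself, i.e.\ the vertices receiving value $0$ --- although the identity $|D_{H,1}|+2|D_{H,2}|=n_H-\partial(D)$ that you then use is exactly the one valid for the correct assignment, so your accounting goes through unchanged.
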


\section{Multiple Nonblocker sets}
\label{sec:Multiple}


We are first going to explain why neither some nice approximation algorithm nor some FPT algorithm (with the standard parameterization) yields useful results.
We shall assume $k > 1$ in this section.

\begin{theorem}
$k$-\textsc{Dominating Set}, $k>1$ cannot be better approximated than \textsc{Dominating Set}.
Likewise, the (standard) parameterized version is W[2]-hard.
\end{theorem}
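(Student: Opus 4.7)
My plan is to establish both statements via a polynomial-time reduction from \textsc{Dominating Set} (DS) to $k$-\textsc{Dominating Set}. Given $G=(V,E)$, I would construct $G^*$ by adjoining $k-1$ new \emph{helper} vertices $U=\{u_1,\dots,u_{k-1}\}$, each adjacent to every vertex of $V(G)$ and to every other $u_j$. For fixed $k$ this adds only a constant number of vertices and is clearly polynomial-time.

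The backbone of the argument is the sandwich $\gamma(G)\le\gamma_k(G^*)\le\gamma(G)+(k-1)$. The upper bound is witnessed by $D\cup U$ for any dominating set $D$ of $G$: each $v\in V\setminus D$ acquires at least one neighbor in $D$ plus all $k-1$ helpers in $U$, yielding $\ge k$ neighbors in the proposed $k$-dominating set. The lower bound follows because $D'\cap V$ dominates $G$ for any $k$-dominating set $D'$ of $G^*$: a vertex $v\in V\setminus D'$ has $\ge k$ neighbors in $D'$, at most $k-1$ of which can lie in $U$, so at least one is in $V\cap D'$.

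For approximation hardness, I would apply any polynomial-time $\rho$-approximation $A$ for $k$-\textsc{Dominating Set} to $G^*$ and return $D:=A(G^*)\cap V$ as the approximate DS of $G$. The bounds above give $|D|\le|A(G^*)|\le\rho\,\gamma_k(G^*)\le\rho\bigl(\gamma(G)+(k-1)\bigr)$, so the ratio achieved for DS on $G$ is at most $\rho(1+(k-1)/\gamma(G))$. For fixed $k$ this tends to $\rho$ as $\gamma(G)\to\infty$, and hence any superconstant inapproximability lower bound for \textsc{Dominating Set}---notably the $(1-\varepsilon)\ln n$ bound under $P\neq NP$---transfers to $k$-\textsc{Dominating Set} up to a multiplicative constant depending only on $k$.

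For the W[2]-hardness claim I would view the same construction as a parameterized reduction $(G,K)\mapsto(G^*,K+k-1)$ with a parameter shift that is an additive function of $k$ alone. The forward implication of the required iff is immediate from the sandwich, but the reverse implication is the main obstacle, since small examples---for instance $G=C_4$ with $k=2$ gives $\gamma_2(G^*)=2<\gamma(G)+(k-1)=3$---show that the sandwich can be strict, so a literal parameter-preserving iff can fail. To repair this, I would augment each helper $u_i$ with a small private forcing gadget whose size depends only on $k$, designed so that every $k$-dominating set of the enlarged graph must contain all helpers (and a predictable constant contribution from each gadget). After this augmentation the iff becomes exact modulo a further $k$-only additive shift in the parameter, which is all an FPT reduction needs. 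Together with the W[2]-hardness of \textsc{Dominating Set} parameterized by solution size, this yields the W[2]-hardness of the standard parameterization of $k$-\textsc{Dominating Set}. The technically delicate part is the design of the forcing gadget so that helpers are really forced into every optimum $k$-dominating set while the quantitative correspondence between DS-values of $G$ and $k$-DS-values of the enlarged graph is preserved.
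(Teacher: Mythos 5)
Your construction (adjoining $k-1$ universal helper vertices) is genuinely different from the paper's, which instead replaces every vertex $v$ by $k$ false twins $v[1],\dots,v[k]$ and every edge by a $K_{k,k}$, so that the new graph has a $k$-dominating set of size $kt$ if and only if $G$ has a dominating set of size $t$. That multiplicative correspondence delivers both halves of the theorem at once: it is literally a parameterized many-one reduction $(G,t)\mapsto(G',kt)$, and it preserves approximation ratios exactly, rather than up to the additive slack $(k-1)/\gamma(G)$ that your sandwich $\gamma(G)\le\gamma_k(G^*)\le\gamma(G)+k-1$ incurs. Your approximation half is nonetheless sound: the lifting $D'\mapsto D'\cap V$ is correct and explicit (arguably more transparent than the paper's, which does not spell out how to extract a small dominating set from an arbitrary $k$-dominating set of the blown-up graph), and the additive loss is harmless, since instances with $\gamma(G)\le (k-1)/\varepsilon$ can be solved exactly in polynomial time for fixed $k$ and $\varepsilon$.

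The genuine gap is in the W[2]-hardness half, and you have diagnosed it yourself: because the sandwich can be strict (your $C_4$ example is correct, as two opposite rim vertices $2$-dominate the wheel $W_4$), the map $(G,t)\mapsto(G^*,t+k-1)$ is not an iff-preserving reduction, and the entire burden of the proof shifts onto the ``small private forcing gadget'' that you never construct. This is not a routine detail: the gadget must simultaneously (i) force all $k-1$ helpers into every minimum $k$-dominating set, (ii) contribute an exactly predictable number of its own vertices to every minimum solution, and (iii) create no new ways of $k$-dominating the vertices of $V$; establishing the resulting exact identity $\gamma_k(G^{**})=\gamma(G)+c_k$ is essentially the whole difficulty of the reduction, and as written the W[2]-hardness claim remains unproven. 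The paper's vertex-splitting construction sidesteps the issue entirely because no vertex of $G'$ is adjacent to everything, so no additive correction and hence no forcing gadget is needed; if you wish to keep the helper-based construction, you must exhibit the gadget and carry out the verification.
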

\begin{proof}
Namely, given an instance $G$ of \textsc{Dominating Set}, we introduce (in total) $k$ copies  of each vertex, say, $v[1],\dots,v[k]$ of vertex $v$, and introduce a $K_{k,k}$ in
$\{u[1],\dots,u[k]\}\cup \{v[1],\dots,v[k]\}$ whenever there is an edge $ uv$ in $G$.
Then, the new graph has a $k$-dominating set of size $kt$ if
and only if the original graph $G$ has a dominating set of size $t$.
\end{proof}

We consider now a combinatorial upper bound on the size of some feasible solution of the minimization problem.

\begin{theorem}[\cite{CocGamShe85}]
Let $G$ be a graph of order $n_G$ and a minimum degree at least $k$. Then $\gamma_k(G)\leq \frac{k}{k+1}n_G$.
\end{theorem}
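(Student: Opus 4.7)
The plan is to establish this classical bound by induction on $n_G$, exploiting the fact that the extremal case is $G = K_{k+1}$ (or disjoint unions of such cliques), where every vertex has degree exactly $k$ and $\gamma_k(G) = k = \frac{k}{k+1}(k+1)$. This suggests a strategy of peeling off $k+1$ vertices at a time, $k$ of which serve to $k$-dominate the $(k+1)$-th, and then recursing.

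For the base case $n_G = k+1$ with $\delta(G) \geq k$, every vertex must be adjacent to every other, so $G = K_{k+1}$; any $k$ vertices form a $k$-dominating set, yielding $\gamma_k(G) = k = \frac{k n_G}{k+1}$ as desired.

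For the inductive step with $n_G > k+1$, I would locate a set $S = \{v_0, v_1, \dots, v_k\}$ of $k+1$ vertices with $\{v_1, \dots, v_k\} \subseteq N(v_0)$ and such that $\delta(G - S) \geq k$ still holds. Applying the induction hypothesis to $G' := G - S$ gives a $k$-dominating set $D'$ of $G'$ with $|D'| \leq \frac{k(n_G - k - 1)}{k+1}$. Then $D := D' \cup \{v_1, \dots, v_k\}$ is a $k$-dominating set of $G$: every vertex of $V(G) \setminus D$ other than $v_0$ lies in $V(G') \setminus D'$ and thus has at least $k$ neighbors in $D' \subseteq D$, while $v_0$ already has its $k$ neighbors $v_1, \dots, v_k$ in $D$. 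An easy calculation gives $|D| \leq \frac{k(n_G - k - 1)}{k+1} + k = \frac{kn_G}{k+1}$, matching the claimed bound exactly (which explains why one must remove precisely $k+1$ vertices at a cost of $k$ added to the solution).

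The hard part will be verifying that such a set $S$ can always be found. I would split into cases driven by $V_{=k} := \{v \in V(G) : d(v) = k\}$. If $V_{=k} = \emptyset$, every vertex has degree at least $k+1$, and one can afford to remove any vertex together with $k$ of its neighbors, with mild care so that a shared neighbor does not drop below degree $k$. If some $w \in V_{=k}$ exists, the natural candidate $S = \{w\} \cup N(w)$ works unless another vertex $u \notin S$ of degree exactly $k$ satisfies $|N(u) \cap N(w)| \geq 1$ and would consequently fall below degree $k$ in $G - S$. Overcoming these overlaps---by switching pivots, or by detecting that $N[w]$ actually spans an entire $K_{k+1}$-component of $G$ which can then be extracted cleanly---is where the bulk of the case analysis lies, and it is precisely in these tight configurations that the constant $\frac{k}{k+1}$ is attained.
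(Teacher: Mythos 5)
Your proposal takes a genuinely different route from the paper, which follows Cockayne, Gamble and Shepherd: there one first deletes edges between vertices of degree greater than $k$ to reach minimum degree exactly $k$, observes that the complement of \emph{any} independent set $T$ is a $k$-dominating set when $\delta(G)\geq k$, and then iteratively replaces $T$ by a maximal independent set of the induced graph on $V\setminus T$ (which has maximum degree at most $k-1$, hence yields a strictly larger independent set) until $|T|\geq n_G/(k+1)$. Your inductive peeling scheme, however, has a genuine gap at exactly the point you defer as ``the hard part'': a set $S=\{v_0,v_1,\dots,v_k\}$ with $\{v_1,\dots,v_k\}\subseteq N(v_0)$ and $\delta(G-S)\geq k$ need not exist, and neither of your proposed repairs (switching pivots, or extracting a $K_{k+1}$-component) covers the failing configurations. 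The complete bipartite graph $K_{k,m}$ with $m\geq k+1$ is a concrete obstruction: it is connected, has minimum degree exactly $k$, and is not a union of copies of $K_{k+1}$, yet no admissible $S$ exists. If $v_0$ lies on the side of size $m$, then $N(v_0)$ is the entire side of size $k$ and $G-S$ consists of $m-1$ isolated vertices; if $v_0$ lies on the side of size $k$, then every surviving vertex of the large side loses the neighbor $v_0$ and drops to degree $k-1$. (For $k=1$ this is already the path $P_3$ or the star $K_{1,3}$.) The bound itself still holds for these graphs, but your inductive step is simply unavailable, so the proof does not go through as planned.

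A secondary point: since $\delta\geq k$ forces order at least $k+1$, the graph $G-S$ in your inductive step is either empty or has at least $k+1$ vertices, so all orders $n_G$ with $k+1<n_G<2(k+1)$ would also have to be handled as additional base cases; your base case only treats $n_G=k+1$. If you want to salvage a local-exchange style argument, the key simplification to borrow from the paper is that the whole problem reduces to exhibiting an independent set of size at least $n_G/(k+1)$, because its complement is automatically $k$-dominating under the degree hypothesis; that existence statement is then proved by the growth argument sketched above rather than by deleting vertices from $G$.
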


The known non-constructive proof can be turned into a polynomial-time algorithm
obtaining the following result.
%
%
%
%

\begin{theorem}\label{AlgokNB} For a given graph $G$ of order $n_G$ and minimum degree at least $k$, one can compute a $k$-dominating set $D$  with $|D|\leq \frac{k}{k+1}n_G$ in polynomial time.
\end{theorem}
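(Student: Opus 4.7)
The plan is to mimic the extremal combinatorial proof of Cockayne, Gamble and Shepherd via a polynomial-time local-search algorithm, following the same template we used for the constructive versions of Theorem~\ref{LamWei} and of the differential bound in Theorem~\ref{Thm-differential-combinatorics}. I initialize with the trivial $k$-dominating set $D := V(G)$, whose complement is empty and hence vacuously satisfies the $k$-domination condition, and then iteratively apply local moves that reduce a carefully chosen potential until a structurally constrained terminal configuration is reached.

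The basic move is a \emph{shrink move}: remove $v \in D$ from $D$ whenever $D \setminus \{v\}$ is still a $k$-dominating set. This is equivalent to the polynomial-time checks (a) $|N(v) \cap D| \geq k$ (so that $v$, once moved to the complement, has enough neighbors in $D$) and (b) every neighbor $u \in N(v) \cap (V \setminus D)$ satisfies $|N(u) \cap D| \geq k+1$. Iterating shrink moves greedily produces, in at most $n_G$ steps, a minimal $k$-dominating set $D$ with complement $B := V(G) \setminus D$.

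At the terminal configuration, every $v \in D$ is blocked from shrinking by either (a) $|N(v) \cap D| \leq k-1$, or (b) the existence of some $u \in N(v) \cap B$ with $|N(u) \cap D| = k$. Write $D = D_a \cup D_b$ according to this dichotomy. Charging each $v \in D_b$ to an arbitrarily chosen type-(b) witness, each $u \in B$ absorbs at most $|N(u) \cap D| = k$ charges, giving $|D_b| \leq k|B|$. The main obstacle is the type-(a) vertices: plain deletion of such a $v$ is forbidden since $v$ itself would be under-dominated in the complement. Mirroring the case analysis of \cite{CocGamShe85}, I would augment the local-search toolbox with \emph{swap moves} of the form ``delete $v \in D_a$ from $D$ and move a suitable $u \in N(v) \cap B$ into $D$'', or small coordinated reconfigurations on the neighbourhood of a $D_a$-vertex (exploiting $\deg(v) \geq k$, which guarantees $|N(v) \cap B| \geq 1$ for every $v \in D_a$). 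These moves either strictly decrease $|D|$ or strictly decrease the lexicographic secondary potential $\sum_{v \in D}\max(0,\, k - |N(v) \cap D|)$, so polynomial termination is preserved; and they are engineered so that $D_a = \emptyset$ at the end.

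Combining the two estimates then yields $|D| = |D_b| \leq k|B| = k(n_G - |D|)$, which rearranges to $|D| \leq \frac{k}{k+1}n_G$, as claimed. The hard part is designing the swap moves and certifying that they actually force $D_a$ to empty in polynomial time; this is where I expect to have to track the case analysis of the original extremal argument of Cockayne--Gamble--Shepherd most closely, and where any quantitative constant (such as the number of swaps needed to eliminate a single $D_a$-vertex) will have to be extracted explicitly from that combinatorial reasoning.
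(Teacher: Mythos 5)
Your proposal has a genuine gap, and you name it yourself: the entire burden of the argument rests on ``swap moves \ldots engineered so that $D_a=\emptyset$ at the end,'' but these moves are never specified, and this is exactly where the difficulty lies. A minimal $k$-dominating set can badly violate the bound (for $k=1$, take the star $K_{1,n-1}$: the set of all leaves is minimal, has size $n-1$, and consists entirely of type-(a) vertices), so the deletion-only phase plus the charging argument for $D_b$ proves nothing on its own. The one concrete swap you mention --- delete $v\in D_a$ and insert some $u\in N(v)\cap B$ --- already fails to preserve feasibility whenever $|N(v)\cap D|<k-1$, since $v$ then has at most $|N(v)\cap D|+1<k$ neighbours in the new set; and the claim that such moves strictly decrease the secondary potential $\sum_{v\in D}\max(0,k-|N(v)\cap D|)$ is asserted, not checked (inserting $u$ can create a new positive term for $u$, and deleting $v$ decreases $|N(w)\cap D|$ for every $w\in N(v)\cap D$, which can \emph{increase} other terms). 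So what you have is a plan to re-derive the extremal case analysis of \cite{CocGamShe85} inside a local-search framework, not a proof.

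The paper sidesteps all of this by never doing local search on $k$-dominating sets at all. It uses the observation that in a graph of minimum degree at least $k$ the complement of \emph{any} independent set is automatically a $k$-dominating set, so the task reduces to constructively finding an independent set of size at least $n_G/(k+1)$. This is done by first greedily deleting edges between vertices of degree exceeding $k$, so that the surviving high-degree vertices form an independent set $S$ and all other vertices have degree exactly $k$; then taking a maximal independent set $T\supseteq S$. If $|V\setminus T|>kn_G/(k+1)$, the graph induced on $V\setminus T$ has maximum degree at most $k-1$ (each of its vertices has a neighbour in the dominating set $T$), so a maximal independent set there is strictly larger than $T$, and iterating increases $|T|$ each round and terminates within $n_G$ iterations. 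If you want to salvage your route, you would have to actually exhibit the coordinated reconfiguration rules and prove both feasibility preservation and monotone progress; it is considerably simpler to adopt the independent-set viewpoint.
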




\begin{proof}
First, we greedily remove edges between  vertices of degree greater than  $k$ obtaining a graph $G'$ of minimum degree (exactly) $k$. 
 Let $S = \{v\in V: d(v) > k\}$. By construction, $S$ is an independent set in $G'$.
We build a maximal independent set $T$ that contains $S$. Then $V\setminus T$ is a $k$-dominating set.

If $|V \setminus T| \leq kn_G/(k+1)$, then $D:=V\setminus T$ is also a $k$-dominating set in the supergraph $G$ of $G'$.
Otherwise, while $|V \setminus T| > kn_G/(k+1)$, construct a maximal independent set $T'$ of $G[V \setminus T]$ and set $T=T'$.
We show in the following that the algorithm terminates.


Let $r = |T|$. When $|V \setminus T| > kn_G/(k+1)$, we get $n_G=r+|V\setminus T| > r+ kn_G/(k+1)= r+ k(r+|V \setminus T|)/(k+1)$, thus $|V \setminus T| > kr$.
Since $T$ is a maximal independent set (and hence a dominating set) and every element of $V\setminus T$ is of degree $k$ in $G'$, every vertex of $V\setminus T$  has degree at most $k-1$ in $G'[V\setminus T]$. It follows that any maximal independent  set of $G'[V\setminus T]$ contains at least $r+1$ vertices (otherwise, $|V\setminus T| \leq r + r(k-1) = rk$).
Compute any maximal independent set $T'$ of $G'[V\setminus T]$.
Now $|T'| > r = |T|$ and $V \setminus T'$ is a $k$-dominating set that is smaller than $V \setminus T$; namely, because the minimum degree is at least $k$, all elements of any independent set are $k$-dominated by its complement. \qed
\end{proof}


\begin{theorem}\label{KDS2kNB}
Let $G$ be a graph of order $n_G$ and let $H$ be a graph of order $n_H$ obtained from $G$ by deleting $d$ vertices and adding $2k$ new vertices,\LV{ with} $d > k$. Let $D_G$ and $D_H$ be $k$-dominating set solutions of $G$
and $H$ such that $a = |D_G|-|D_H| = d-k$. 
If $|D_H| \leq c\cdot n_H$ and $d\leq \gamma_k(G)$, then $V(G)\setminus D_G$ is a 
$k$-nonblocker of $G$ whose size  $n_G-|D_G|$ is within a factor of $(1-c)^{-1}$ from optimum (modulo an additive constant less than $k$).
\end{theorem}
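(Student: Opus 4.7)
The plan is to mimic the straightforward algebraic manipulation that establishes Theorem~\ref{TDS2UHS} and Theorem~\ref{RDS2Diff}, but carefully tracking the extra terms introduced by the $2k$ vertices that are added to produce $H$ (rather than only deleted). In those earlier theorems, one had $n_H = n_G - d$; now we have $n_H = n_G - d + 2k$, and moreover $|D_G| - |D_H| = d-k$ rather than being bounded by $d$. These two deviations are responsible for the additive slack appearing in the conclusion, and the main task is to show that this slack is genuinely bounded by $k$ (or, more precisely, by something less than $k$ in absolute value).

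First I would write $n_H = n_G - d + 2k$ and substitute into the hypothesis $|D_H| \le c \cdot n_H$ to obtain
\[
 |D_H| \le c(n_G - d + 2k).
\]
Next I would add $a = d-k$ on both sides to obtain an upper bound on $|D_G|$:
\[
 |D_G| \le c(n_G - d + 2k) + (d-k) = c\, n_G + (1-c)d + (2c-1)k.
\]
Using $d \le \gamma_k(G)$ and the fact that $1-c \ge 0$, this yields
\[
 |D_G| \le c\, n_G + (1-c)\gamma_k(G) + (2c-1)k.
\]

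Now I would rearrange to bound the size of the $k$-nonblocker $V(G)\setminus D_G$, noting that $n_G - \gamma_k(G)$ is exactly the optimum of $k$-\textsc{Nonblocker} on $G$:
\[
 n_G - |D_G| \;\ge\; (1-c)\bigl(n_G - \gamma_k(G)\bigr) - (2c-1)k.
\]
Dividing by $(1-c)$ gives the claimed approximation factor $(1-c)^{-1}$, up to the additive term $(2c-1)k/(1-c)$ (or, absorbing the $(1-c)^{-1}$ into the statement, up to an additive constant of the form $(2c-1)k$). Since $0<c<1$ we have $|2c-1|<1$, so this correction is indeed bounded by $k$ in absolute value, matching the ``modulo an additive constant less than $k$'' in the statement.

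The step I expect to require the most care is the sign handling of the $(2c-1)k$ term: when $c<1/2$ this term is negative and hence actually \emph{helps} us, whereas when $c>1/2$ it is a genuine loss. In either case the bound $|(2c-1)k|<k$ is what makes the additive error uniformly small; I would spell this out explicitly to justify the ``less than $k$'' formulation. Beyond that, the argument is essentially the same bookkeeping as in Theorems~\ref{TDS2UHS} and~\ref{RDS2Diff}, and no additional combinatorial input is needed.
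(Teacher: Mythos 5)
Your proposal is correct and follows essentially the same computation as the paper's proof: substitute $n_H=n_G-d+2k$ and $a=d-k$, bound $|D_G|\leq cn_G+(1-c)\gamma_k(G)+k(2c-1)$, and rearrange to get $n_G-|D_G|\geq(1-c)(n_G-\gamma_k(G))-k(2c-1)$. The only cosmetic difference is that the paper bounds the additive term via the specific value $c=\frac{k}{k+1}$ (giving $k(2c-1)=\frac{k(k-1)}{k+1}<k$), whereas you use the generic estimate $|2c-1|<1$; both suffice.
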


\LV{
\begin{proof} 
As $n_H=n_G-d+2k$,
$|D_G| = |D_H|+a \leq c(n_G-d+2k)+d-k
\leq cn_G + (1-c)d +2ck-k \leq cn_G + (1-c)\gamma_k(G)+ k(2c-1)$.
Hence, $n_G-|D_G| \geq n_G - cn_G - (1-c) \gamma_k(G) - k(2c-1) = (1-c)(n_G-\gamma_k(G)) - k(2c-1)$.
This immediately yields an approximation factor of $(1-c)^{-1}$ (modulo the additive constant $k(2c-1)\leq \frac{k(k-1)}{k+1}< k$).\qed
\end{proof}}

\noindent
\SV{ Proofs of this and other results can be found in the long version of this paper.}



In the rest of this section, we present reduction rules that produce a graph $G$ with minimum degree at least $k$.
%
%
%
Our reduction rules mainly deal with vertices  of degree $k-1$ or less. Each such vertex must be in any $k$-dominating set. We shall refer to such vertices by {\em low-degree} vertices in the sequel.

\paragraph{\bf Low-Degree Vertex Deletion\LV{ Reduction}\SV{.}} If a low-degree vertex $v$ has only low-degree neighbors, then delete $v$. If there is a vertex $u$ with $k+1$ low-degree neighbors, then delete one neighbor of $u$.

\begin{observation}\LV{The} \label{low-degree deletion}
Low-Degree Vertex Deletion\LV{ Reduction} is $\alpha$-preserving for any $\alpha\geq 1$.
\end{observation}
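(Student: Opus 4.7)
I would verify $\alpha$-preservation for each of the two sub-rules that Low-Degree Vertex Deletion comprises, then combine them with Lemma~\ref{lem-compose}. Throughout, I exploit the fact already stated in the paper: every vertex of degree strictly less than $k$ must lie in every $k$-dominating set of $G$, and hence in no $k$-nonblocker.

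For the first sub-rule (delete a low-degree vertex $v$ all of whose neighbors are low-degree), I would argue that $D \mapsto D \setminus \{v\}$ is a bijection between $k$-dominating sets of $G$ and those of $G' := G - v$. Forward: any $z \in V(G') \setminus (D \setminus \{v\})$ equals some $z \in V(G) \setminus D$, and cannot be a neighbor of $v$ (otherwise $z$ would be low-degree and hence in $D$, contradiction), so $|(D \setminus \{v\}) \cap N_{G'}(z)| = |D \cap N_G(z)| \geq k$. Reverse: $D' \cup \{v\}$ trivially $k$-dominates $G$. Hence $\gamma_k(G) = \gamma_k(G') + 1$ and $m^{*}_{\mathcal{P}}(G) = m^{*}_{\mathcal{P}}(G')$. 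Taking $\mathop{\tt sol}_{\mathcal{P}}$ to be the identity on vertex sets (valid since $v \notin y'$) gives $a = b = 0$, verifying $\alpha$-preservation for every $\alpha \geq 1$.

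For the second sub-rule (delete one of the $k+1$ low-degree neighbors $w$ of some vertex $u$), the central observation is that $u$'s $k$-domination is over-determined: $|D \cap N_G(u)| \geq k+1$, so removing $w$ leaves $u$ still $k$-dominated. Since $\deg_G(w) \leq k-1$, $w$ has at most $k-2$ neighbors besides $u$; each such neighbor is either low-degree (in $D$ in both $G$ and $G'$, thus unaffected) or high-degree. I would argue, by a careful choice of $w$ among the $k+1$ candidates together with at most $k-2$ local swaps in the lifted nonblocker, that constants $a = b$ (depending only on $k$, hence constants of the rule) suffice. In the cleanest case, $w$ can be chosen so that all its non-$u$ neighbors are low-degree, in which case the analysis reduces to the previous sub-rule with $a = b = 0$.

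The main obstacle is in this second sub-rule: the removal of $w$ might drop the degree of a non-$u$ high-degree neighbor $z$ of $w$ below $k$, rendering $z$ low-degree in $G'$ and forcing it into $D'$ even though $z$ could have sat in an optimal $k$-nonblocker of $G$. The technical heart of the proof is to show that either one can always avoid this by selecting $w$ appropriately, or else repair the lifted nonblocker with only an $O(k)$ additive deficit (a constant since $k$ is a fixed parameter of the problem). Once both sub-rules are established, Lemma~\ref{lem-compose} composes them into the single rule stated in the observation.
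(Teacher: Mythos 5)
Your handling of the first sub-rule is complete and correct, and it matches the paper's (much terser) justification: a low-degree vertex all of whose neighbors are low-degree can be deleted with $a=b=0$ because it and all its neighbors are forced into every $k$-dominating set. The problem is the second sub-rule, where you name the difficulty and then defer it: you assert that either a careful choice of the deleted neighbor $w$ avoids the degree-drop problem, or that $O(k)$ local swaps repair it, but you prove neither, and the first escape route is in fact unavailable in general. For $k=3$, take $u$ with four degree-two neighbors $w_1,\dots,w_4$, where each $w_i$ has a second neighbor $z_i$ of degree exactly $3$ whose other two neighbors are leaves. Every low-degree vertex is forced into every $3$-dominating set, so $\gamma_3(G)=12$ and $m^*_\mathcal{P}(G)=5$; but deleting any $w_i$ demotes $z_i$ to degree two and forces it into the dominating set of the reduced graph, whence $\gamma_3(G')=\gamma_3(G)$ and $m^*_\mathcal{P}(G')=m^*_\mathcal{P}(G)-1$. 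Thus condition (1) of the definition fails with $a=0$ for \emph{every} admissible choice of $w$, and a correct proof must use strictly positive constants $a=b$ together with an explicit lifting map $\mathop{\tt sol}_\mathcal{P}$ that gains at least $b$ vertices on every solution of the reduced instance (in the example above, one can add the demoted vertex $z_i$ back to the nonblocker and move $w_i$ into the dominating set, giving $a=b=1$; your proposal never constructs such a map in general).

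To be fair, the paper's own proof is equally laconic at exactly this point: it claims $a=b=0$ for both sub-rules after remarking that the deleted neighbors ``can be placed in the $k$-dominating set,'' and it never discusses the side effect of the deletion on the degrees and domination counts of the other neighbors of the deleted vertex. So your instinct about where the danger lies is exactly right, and you have located the one genuinely delicate step of this observation. But as submitted, the second half of your argument is an announcement of a proof rather than a proof: the ``technical heart'' you describe is precisely what needs to be supplied, and without it the observation is not established.
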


\LV{
\begin{proof} 
The soundness of Low-Degree Vertex  Deletion is rather straightforward. A low-degree vertex that is not a neighbor of a high-degree vertex can be placed (safely) in any $k$-dominating set.
If the number of low-degree neighbors of a vertex $u$ is $t > k$, then we can safely delete $t-k$ such neighbors and place them in the $k$-dominating set. We keep $k$ neighbors to make sure any subsequent solution places $u$ in the nonblocker set. This reduction is  $\alpha$-preserving with constants $a=b=0$.\qed
\end{proof}}

\paragraph{\bf Low-Degree Merging\LV{ Reduction}\SV{.}} Let $G$ be an instance of \textsc{$k$-Nonblocker} that has been subject to the Low-Degree Vertex Deletion\LV{ Reduction} rule. Then we add a complete bipartite graph $K_{k,k}$ with new vertices $u_1,\ldots,u_k$, $v_1,\ldots,v_k$.  
For every high-degree vertex $v\in V$ having $q$ low-degree neighbors $w_1,\ldots,w_q$, with $q\leq k$, delete  $w_1,\ldots,w_q$, and connect $v$ to $v_1,\ldots,v_q$.

\begin{observation}
\label{Leaf Merging}\LV{The}
Low-Degree Merging\LV{ Reduction rule} is $\alpha$-preserving for any $\alpha\geq 1$.
\end{observation}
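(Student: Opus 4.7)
The plan is to exhibit the solution-lifting map from $H$ back to $G$ and verify the $\alpha$-preserving inequalities through the natural correspondence between $k$-dominating sets of the two graphs. Given a $k$-dominating set $D_H$ of $H$, I define the corresponding set $D_G := L \cup (D_H \cap V(G))$, where $L$ denotes the set of (deleted) low-degree vertices of $G$. The inclusion $L \subseteq D_G$ is forced, since every low-degree vertex of $G$ has fewer than $k$ neighbors and must therefore belong to any $k$-dominating set of $G$.

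The core verification step is to show that $D_G$ is a valid $k$-dominating set of $G$. For a high-degree vertex $v \notin D_G$ we also have $v \notin D_H$, so $v$ has at least $k$ neighbors in $D_H$ within $H$. Each such neighbor is either a high-degree vertex of $G$ (and hence in $D_G$) or one of the newly introduced vertices $v_1,\ldots,v_q$ attached to $v$ by the rule. Crucially, the $v_j$-connections in $H$ are in one-to-one correspondence with the original low-degree neighbors $w_1,\ldots,w_q$ of $v$ in $G$, all of which lie in $L \subseteq D_G$. Hence $v$ still has at least $k$ neighbors in $D_G$, confirming $k$-domination.

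Next, I would bound sizes to extract the approximation-preserving constants. The key observation is that any $k$-dominating set of $H$ must meet the newly added $K_{k,k}$ in at least $k$ vertices: if some $u_j \notin D_H$, then $u_j$'s neighborhood $\{v_1,\ldots,v_k\}$ must lie entirely in $D_H$; otherwise $\{u_1,\ldots,u_k\} \subseteq D_H$. In both cases the intersection has size at least $k$, giving $|D_G| \leq L + |D_H| - k$. Symmetrically, for any $k$-dominating set $D_G$ of $G$, the set $(D_G \setminus L) \cup \{v_1,\ldots,v_k\}$ is a valid $k$-dominating set of $H$ of size $|D_G| - L + k$. Combined, these bounds yield $m^*(H) - m^*(G) = k$ and a matching shift of $k$ between feasible solutions, so that appropriate constants $a$ and $b$ absorbing the $K_{k,k}$-bonus of $k$ (with $a=b$) can be chosen, trivially verifying $a \leq \alpha b$ for every $\alpha \geq 1$.

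The main obstacle is the $k$-domination transfer in the first step, which relies on the tight bijection between the added $v_j$-neighbors of a high-degree vertex in $H$ and its deleted low-degree neighbors in $G$; once this is verified, the size-counting arguments and the choice of constants follow essentially immediately.
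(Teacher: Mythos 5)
Your proposal is correct and follows essentially the same route as the paper's own proof: both establish that the optima and the feasible solutions of $G$ and of the reduced graph differ by exactly $k$ (the contribution of the added $K_{k,k}$ gadget, which any $k$-dominating set must meet in at least $k$ vertices), and then invoke the constants $a=b=k$. You phrase the correspondence in terms of $k$-dominating sets and make explicit the $w_j\leftrightarrow v_j$ domination transfer and the normalization of the gadget vertices, details the paper leaves implicit, but the underlying argument is the same.
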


\LV{
\begin{proof} 
We are going to verify the definition of $\alpha$-preserving reductions; to this end, we  show  that $a=b=k$ works out in our case.
Let $G=(V,E)$ be the original graph and $G'=(V',E')$ the graph obtained from $G$ by deleting $w_1,\ldots,w_q$, and connect $v$ to $v_1,\ldots,v_q$.

\smallskip\noindent
(a) Let $C$ be a maximum $k$-nonblocker for $G$. $C$ does not contain
$w_1,\ldots,w_q$ since $w_1,\ldots,w_q$ are part of any $k$-dominating set. Consider $C'=C \cup
\{u_1,\ldots,u_k\}$. $C'$ is a maximum $k$-nonblocker set for $G'$  of size $|C'|=|C|+k$.

\smallskip\noindent
(b) Consider now the converse. Let $C'$  be some $k$-nonblocker set for $G'$. We can suppose that $C'$ contains $u_1,\ldots,u_k$, otherwise we remove $v_1,\ldots,v_k$ and add $u_1,\ldots,u_k$. Thus $C= C'\setminus\{u_1,\ldots,u_k\}$ is a $k$-nonblocker set for $G$ of size $|C|=|C'|-k$.\qed
\end{proof}
}
The  reductions above take polynomial time, so that  Theorem~\ref{KDS2kNB} allows us to conclude:

\begin{theorem}For any instance $x$ of \textsc{$k$-Nonblocker}, one can compute in 
polynomial time a $k$-nonblocker set $S$ with $|S| \geq 
\frac{m^*(x)}{k+1}-k$.
\end{theorem}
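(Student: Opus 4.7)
The plan is to chain together the ingredients developed in this section. First, given an instance $x=G$, I would exhaustively apply the Low-Degree Vertex Deletion reduction and then the Low-Degree Merging reduction. By Observations~\ref{low-degree deletion} and~\ref{Leaf Merging} (together with Lemma~\ref{lem-compose}), this composition is an $\alpha$-preserving, polynomial-time reduction for every $\alpha\ge 1$, so any approximate solution for the reduced graph lifts to an equally good approximate solution for $G$.

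Next I would verify that the reduced graph $H$ produced above has minimum degree at least $k$. Indeed, after Low-Degree Vertex Deletion, every low-degree vertex has a high-degree neighbor and no high-degree vertex carries more than $k$ low-degree neighbors; the subsequent Low-Degree Merging step removes all remaining low-degree vertices of the original graph by attaching each such group to the freshly added $K_{k,k}$-gadget, whose vertices $u_1,\dots,u_k,v_1,\dots,v_k$ all have degree at least $k$ by construction. So Theorem~\ref{AlgokNB} applies to $H$ and yields, in polynomial time, a $k$-dominating set $D_H$ of $H$ with $|D_H|\le \frac{k}{k+1}n_H$; in particular this is the bound we need with $c=\frac{k}{k+1}$, giving approximation factor $(1-c)^{-1}=k+1$.

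Then I would invoke Theorem~\ref{KDS2kNB} to transfer $D_H$ back to a $k$-dominating set $D_G$ of the (post-deletion, pre-merging) graph. The surgery in Low-Degree Merging deletes at most $d$ low-degree vertices and introduces $2k$ new vertices of the $K_{k,k}$-gadget, exactly the setup of Theorem~\ref{KDS2kNB}; the lifting puts the $d-k$ original low-degree vertices plus appropriate gadget vertices into $D_G$, so that $a=|D_G|-|D_H|=d-k$ as required, and $d\le \gamma_k(G)$ because every deleted low-degree vertex belongs to every $k$-dominating set of $G$. Applying Theorem~\ref{KDS2kNB} with $c=\frac{k}{k+1}$ yields a $k$-nonblocker $S=V(G)\setminus D_G$ with $|S|\ge \frac{m^*(x)}{k+1}-k(2c-1)$, and since $k(2c-1)=\frac{k(k-1)}{k+1}<k$ we get $|S|\ge \frac{m^*(x)}{k+1}-k$.

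The delicate step is the bookkeeping in Theorem~\ref{KDS2kNB}: one must check that the $k$-dominating set $D_H$ can actually be extended by exactly $d-k$ extra vertices (coming from the low-degree vertices deleted in the reduction) while remaining a valid $k$-dominating set of $G$, and that the inequality $d\le \gamma_k(G)$ is preserved throughout. Both are guaranteed by the form of the Low-Degree reductions, which ensures that the deleted vertices are forced members of every $k$-dominating set and that the added $K_{k,k}$-gadget can always be $k$-dominated by a bounded number of its own vertices. Everything else is a routine composition of polynomial-time subroutines.\qed
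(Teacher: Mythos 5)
Your proposal is correct and follows exactly the paper's intended argument: exhaustively apply the two Low-Degree reductions, observe that the result has minimum degree at least $k$, invoke Theorem~\ref{AlgokNB} to get a $k$-dominating set of size at most $\frac{k}{k+1}n_H$, and transfer back via Theorem~\ref{KDS2kNB} with $c=\frac{k}{k+1}$. One small wording slip: the lifting puts all $d$ deleted low-degree vertices into $D_G$ (not $d-k$ of them, and no gadget vertices, since those do not exist in $G$), while the $k$ gadget vertices of $D_H$ are dropped, which is what yields the net $a=d-k$.
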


\LV{Combinations with the previous section as indicated in the definitions of~\cite{HanVol2009} should be possible.
We leave this for future research, similar to variants like \textsc{Liar's Domination}; see \cite{BisGhoPau2013} and the  literature quoted therein.}

\section{Conclusions}

We presented  a framework for obtaining approximation\LV{ algorithm}s for maximization problems, inspired by similar reasonings for obtaining kernelization results. We see five \LV{major }directions from this approach: 
\begin{itemize}
\item
Paraphrasing \cite{Estetal2005}, we might say that not only FPT, but also polynomial-time maximization is \emph{{P}-time extremal structure}.
This should inspire mathematicians working in\LV{ graph theory (and other areas of} combinatorics\LV{)} to work out useful \LV{combinatorial }bounds on different graph parameters. We started on domination-type parameters, and this might be a first venue of continuation, for example, along the lines sketched in \LV{\cite{BorMic98,BouBliChe2014,KamVol2009}}\SV{\cite{BorMic98,KamVol2009}}.
\item
Conversely, approximation algorithms that stay within the combinatorial grounds of their problem tend to reveal (combinatorial) insights into the problem that might get lost when moving for instance into the area of Mathematical Programming. 
\item
The notion of $\alpha$-preserving reduction is similar to the local ratio techniques~\cite{Baretal2004} that allowed to re-interpret many \LV{(e.g., primal-dual) }approximation algorithms (for minimization problems) in a purely combinatorial fashion\LV{; see \cite{BarRaw2005}}.
We see \LV{some }hope for similar developments using $\alpha$-preserving reduction for maximization problems.
\item
The fact that  $\alpha$-preserving reductions are inspired by FPT techniques should allow to adapt these notions for\LV{ obtaining new and faster} parameterized approximation algorithms. 
\item
Reductions are often close to practical heuristics and hence allow for fast implementations.
\end{itemize}

\SV{\small\smallskip\noindent\textit{Acknowledgements.}} 
\LV{\paragraph{Acknowledgements.}}
We are grateful for\LV{ discussing this paper at the Bertinoro Workshop on Parameterized Approximation in May 2014}\SV{ the support by the bilateral research cooperation
CEDRE between France and  Lebanon (grant number
30885TM)}.
\SV{\normalsize}


\end{document}